\documentclass[10pt, letterpaper]{IEEEtran}
\usepackage{amsmath,amsfonts,amssymb}
\usepackage{graphicx}
\usepackage{siunitx}
\usepackage{booktabs}

\usepackage{mathrsfs}
\usepackage{amsthm}
\usepackage{algorithm}
\usepackage{algpseudocode}

\usepackage{subcaption}
\usepackage[T1]{fontenc}
\usepackage{microtype}
\microtypesetup{protrusion=true, expansion=true}

\newcommand{\Prb}{\mathbb{P}} 
\usepackage{tikz}
\usetikzlibrary{arrows.meta,positioning,shadows}

\usepackage{enumitem}

\usepackage{amsthm}

\theoremstyle{plain}
\newtheorem{theorem}{Theorem}
\newtheorem{lemma}{Lemma}
\newtheorem{corollary}{Corollary}

\theoremstyle{definition}
\newtheorem{definition}{Definition}
\newtheorem{assumption}{Assumption}

\theoremstyle{remark}

\usepackage{hyperref}

\usepackage{amsthm}

\usepackage{soul} 
\definecolor{ThananaEdit}{rgb}{0.0,0.3,0.9} 
\definecolor{ThananaMark}{rgb}{0.80,0.90,1.00} 

\title{Cognitive-Flexible Control via Latent Model Reorganization with Predictive Safety Guarantees}

\author{Thanana Nuchkrua and Sudchai Boonto, \IEEEmembership{Senior Member, IEEE}%
\thanks{Corresponding author: Sudchai Boonto.}
\thanks{The authors are with the Department of Control Systems and Instrumentation Engineering, King Mongkut's University of Technology Thonburi, Bangkok 10140, Thailand (e-mail: thanana.nuch@yahoo.com; sudchai.boo@kmutt.ac.th).}}

\begin{document}
	\maketitle
	
\begin{abstract}
Learning-enabled control systems must maintain safety when system dynamics and sensing conditions change abruptly. Although stochastic latent-state models enable uncertainty-aware control, most existing approaches rely on fixed internal representations and can degrade significantly under distributional shift. This letter proposes a \emph{cognitive-flexible control} framework in which latent belief representations adapt online, while the control law remains 	explicit and safety-certified. We introduce a Cognitive-Flexible Deep Stochastic State-Space Model (CF--DeepSSSM) that reorganizes latent representations subject to a bounded \emph{Cognitive Flexibility Index} (CFI), and embeds the adapted model within a 	Bayesian model predictive control (MPC) scheme. We establish guarantees on bounded posterior drift, recursive feasibility, and closed-loop stability. Simulation results under abrupt changes in system dynamics and observations demonstrate safe representation adaptation with rapid performance recovery, highlighting the benefits of learning-enabled, rather than learning-based, control for nonstationary cyber--physical systems.
\end{abstract}
	
\begin{IEEEkeywords}
Cognitive flexibility, stochastic control, deep stochastic state-space models,
		Bayesian model predictive control, safe learning, adaptive systems
	\end{IEEEkeywords}
\section{Introduction}
Learning-enabled control systems, \emph{i.e.}, cyber--physical systems
(CPSs), increasingly operate in physically interactive
environments where context shifts are unavoidable.
Changes in dynamics, sensing reliability, and interaction conditions can occur
abruptly, requiring controllers to remain safe and effective under evolving
latent behavior, especially in safety-critical
applications~\cite{Derler2012CPS}. 

A common response in learning-enabled control is to pair learned latent dynamics
models with constraint-aware predictive control, since model predictive control
(MPC) provides a principled mechanism for enforcing safety constraints under
uncertainty~\cite{Hewing2020LearningMPC}.
Within this paradigm, stochastic latent world models enable model-based learning
and control~\cite{Gedon2021DeepSSSM}.
Deep stochastic state-space models (Deep SSSMs), in particular, support belief
propagation and uncertainty-aware prediction through learned transition and
observation models~\cite{Fraccaro2017,Karl2017,Hafner2019}, while structured
priors and hybrid physics--learning formulations improve data
efficiency~\cite{Becker2021}.
However, most existing approaches treat the observation-to-latent mapping as
stationary and adapt primarily through parameter updates; under regime changes
or sensing variations, this can lead to representation mis-specification,
uncertainty miscalibration, and a loss of predictive safety.
Crucially, these latent world model frameworks provide limited mechanisms for
\emph{regulated representation reorganization} under distributional shift.

From a control perspective, the central challenge is therefore not only to learn
new parameters, but to determine \emph{when} internal latent representations
should be reorganized and \emph{how} such reorganization can be carried out
without violating safety during the transition.
Classical adaptive and robust control methods provide strong stability guarantees
under structured parametric uncertainty~\cite{Slotine1991,Ioannou1996}, but rely
on fixed model structures and do not accommodate changes in internal
representations.
More recent learning-based safe control approaches incorporate learned dynamics
and uncertainty into constraint-enforcing control laws, including robust and
adaptive MPC~\cite{Aswani2013}, predictive safety filters and
chance-constrained control~\cite{Wabersich2021PSF,Soloperto2023BayesianActuation},
and safe reinforcement learning methods based on Lyapunov conditions or
constrained policy optimization~\cite{Achiam2017,Chow2018,Berkenkamp2021SafeRL,
	Thananjeyan2021Safety}.
While these methods effectively regulate inputs under model uncertainty, they
typically assume a \emph{fixed internal representation}; under regime shifts,
this assumption can lead to miscalibrated uncertainty, overly conservative
behavior, or loss of safety guarantees.

In parallel, cognitive flexibility has been studied as the ability to
adapt internal representations in response to changing contexts~\cite{Scott1962}.
Related ideas appear in meta-learning and rapid adaptation frameworks, where
representations or update rules are adjusted online to improve performance under
distributional shift~\cite{BelmonteBaeza2022MetaRL,McClement2022MetaRL,
	Goldie2024LearnedOptRL,Goldie2025MetaRL}.
However, these approaches are largely performance-driven and do not address how
latent representation changes should be regulated to preserve safety, a
limitation that is particularly critical in learning-enabled control where
representation changes directly affect uncertainty calibration.

Motivated by this gap, this letter introduces a cognitive-flexible control
framework that enables online reorganization of latent belief models while
maintaining predictive safety.
Representation adaptation is explicitly regulated and coupled with adaptive
constraint tightening, allowing the controller to respond to distributional
shifts without violating safety guarantees during transition.

\textbf{Contributions.}
This letter makes the following contributions.
\textbf{(i)} We formalize \emph{cognitive flexibility} in stochastic control as     
the regulated reorganization of latent belief representations, going beyond classical adaptive and robust control
frameworks that assume fixed model structures~\cite{Ioannou1996}.
\textbf{(ii)} We propose a cognitive-flexible Deep Stochastic State-Space Model
(CF--DeepSSSM) that enables online posterior restructuring, unlike existing
latent world models~\cite{Goldie2025MetaRL} that adapt only through parameter updates under stationary
representations~\cite{Hafner2019,Becker2021,Gedon2021DeepSSSM}.
\textbf{(iii)} We develop a safety-certified control mechanism with adaptive
uncertainty tightening that preserves constraint satisfaction during model
evolution, complementing prior safe and learning-based MPC approaches that
assume fixed internal representations~\cite{Aswani2013,Hewing2020LearningMPC,Wabersich2021PSF}.
\textbf{(iv)} We establish theoretical guarantees of bounded posterior drift and
closed-loop stability, extending existing safety and stability results for
learning-enabled control~\cite{Berkenkamp2021SafeRL}, and
validate the proposed approach in simulation under abrupt dynamics and
observation shifts.

The remainder of this letter is organized as follows.
Section~\ref{sec:Formulation} formulates the problem and introduces the modeling
assumptions.
Section~\ref{sec:method} presents the proposed CF--DeepSSSM control architecture.
Section~\ref{sec:theory} establishes theoretical guarantees on bounded posterior
drift, recursive feasibility, and closed-loop stability.
Simulation results are reported in Section~\ref{sec:simulation}, followed by
concluding remarks in Section~\ref{sec:conclusion}.
	
	\section{Preliminaries and Problem Formulation}
\label{sec:Formulation}

\subsection{Preliminaries}

We consider a partially observable stochastic dynamical system
\begin{equation}
	\label{eq:plant_pomdp} 
	x_{t+1} = f(x_t,u_t,w_t), 
	\qquad
	o_t = h(x_t,v_t),
\end{equation}
where $x_t \in \mathbb{R}^n$ denotes the (unobserved) physical interaction state,
$u_t \in \mathbb{R}^m$ the control input, and
$o_t \in \mathbb{R}^p$ the measured observation.
The disturbances $w_t$ and $v_t$ have unknown and possibly time-varying
distributions. Since $x_t$ is not directly measurable, control decisions
must rely on inferred internal representations.
Safety is specified in the physical state-input space by the admissible set
\begin{equation}
	\label{eq:safety_set}  
		(x_t,u_t) \in \mathcal{S}
		:= \{(x,u) \mid \mathcal{G}_i(x,u) \le 0,\; i=1,\dots,q\},
	\end{equation}
	where each $\mathcal{G}_i$ is Lipschitz and encodes biomechanical and
	actuation limits. The safety requirement is probabilistic:
	\begin{equation}
		\label{eq:prob_safety}
		\mathbb{P}\big((x_t,u_t)\in\mathcal{S}\big) \ge 1-\delta,
	\end{equation}
	for a prescribed violation level $\delta\in(0,1)$.
	Because the state $x_t$ is unobserved, constraint satisfaction must be
	verified through a predictive distribution over physical states.
	To this end, the controller maintains a compact latent belief
		$z_t \in \mathbb{R}^k$, $k\ll n$, inferred from the interaction history
		$\mathcal{H}_t=\{o_0,u_0,\dots,o_t\}$ via
	\begin{equation}
		\label{eq:latent_belief}                
		z_t \sim q_{\phi_t}(z_t \mid \mathcal{H}_t),
	\end{equation}
	where $\phi_t$ parameterizes the inference mapping.
	The pair $(z_t,\phi_t)$ summarizes the posterior belief over the physical state.
	
	Belief and physical space are connected through a predictive decoder
	\begin{equation}
		\label{eq:decoder}
		x_t \sim p_\theta(x_t \mid z_t),
	\end{equation}
	induced by a stochastic latent dynamics model
	\begin{equation}
		\label{eq:deepsssm} 
		z_{t+1} \sim p_\theta(z_{t+1}\mid z_t,u_t),
		\qquad
		o_t \sim p_\theta(o_t\mid z_t).
	\end{equation}
	The parameter $\theta$ defines both the latent transition and the
	latent-to-physical predictive mapping. The resulting belief-induced
		predictive distribution over physical states is
	\begin{equation}
		\label{eq:predictive_physical}
		p_\theta(x_t \mid \mathcal{H}_t)
		=
		\int p_\theta(x_t \mid z_t)\,
		q_{\phi_t}(z_t \mid \mathcal{H}_t)\,dz_t .
	\end{equation}
	
	Safety constraints in \eqref{eq:safety_set} are evaluated with respect to
		the predictive physical distribution defined in \eqref{eq:predictive_physical}.
	In particular, both the expectation in \eqref{eq:control_policy} and the
		probabilistic safety requirement \eqref{eq:prob_safety} are defined with
		respect to the same belief-induced measure over $x_t$, ensuring that
		performance evaluation and constraint enforcement are probabilistically
		consistent.

	
	\subsection{Problem Formulation}
	Given this predictive model, the control policy $\pi$ generates
		$u_t=\pi(z_t)$ and solves
	\begin{equation}
		\label{eq:control_policy}  
		\min_{\pi}
		\;
		\mathbb{E}_{x_t \sim p_\theta(\cdot\mid \mathcal{H}_t)}
		\!\left[
		\sum_{t=0}^{T}\ell(x_t,u_t)
		\right]
	\end{equation}
	subject to the probabilistic safety requirement \eqref{eq:prob_safety},
	where $\ell(x_t,u_t)$ is a stage cost defined in physical space.
	Thus, control is computed in latent space, while performance and safety
	are defined in physical space through the predictive mapping
	\eqref{eq:predictive_physical}.

	
	Under distributional shift, the inference mapping must adapt.
	We define cognitive flexibility (CF) as a regulated evolution
	\begin{equation}
		\label{eq:cf_constraint} 
		\|\phi_t-\phi_{t-1}\| \le \epsilon,
	\end{equation}
	where $\epsilon>0$ bounds the instantaneous rate of latent
	reorganization and prevents excessive representation drift.
	
	\textbf{Problem Statement:}
	Design a latent-state feedback policy $u_t=\pi(z_t)$ such that
		(i) the probabilistic safety requirement
		$\mathbb{P}((x_t,u_t)\in\mathcal{S})\ge 1-\delta$
		holds for all $t$,
		(ii) the expected physical cost \eqref{eq:control_policy} is minimized,
		(iii) the inference evolution satisfies \eqref{eq:cf_constraint}.
	The core challenge is that safety is defined in physical space
	through the constraints in \eqref{eq:safety_set}--\eqref{eq:prob_safety},
  whereas control is computed over a dynamically evolving latent belief
	inferred in \eqref{eq:latent_belief}


\section{Proposed CF--DeepSSSM Method}
\label{sec:method}
To address the problem in Sec.~\ref{sec:Formulation}, we propose
\emph{Cognitive-Flexible DeepSSSM} (CF--DeepSSSM), a closed-loop
architecture that combines latent-state modeling, belief-space
predictive safety control, and surprise-regulated representation
adaptation.	
We first model the system dynamics in
\eqref{eq:plant_pomdp} through a Deep SSSM \footnote{Deep SSSM is “deep” in the sense that the inference encoder 
		$q_{\phi_t}(z_t\mid\mathcal{H}_t)$, in~\eqref{eq:latent_belief} the latent transition 
		$p_{\theta_t}(z_{t+1}\mid z_t,u_t)$ in~\eqref{eq:deepsssm}, and the observation model 
		$p_{\theta_t}(o_t\mid z_t)$ in~\eqref{eq:deepsssm} are parameterized by multi-layer 
		perceptrons (two hidden layers with ReLU activations), producing 
		both predictive mean and covariance in latent space.}     defined by \eqref{eq:deepsssm}.  
The evolution of the latent belief is described by
\begin{equation}
	p_{\theta_t}(z_{t+1}\mid z_t,u_t), \qquad
	p_{\theta_t}(o_t\mid z_t),
	\label{eq:latentbelief}
\end{equation}
where the model parameters $\theta_t$ are learned via stochastic variational
inference.
This formulation yields a compact latent representation together with a
	predictive covariance $\Sigma_t := \Sigma_{\theta_t}(z_t,u_t)$,
	where $\Sigma_{\theta_t}$ denotes the latent process noise covariance.
	We do not assume perfect Bayesian calibration; instead,
	$\Sigma_t$ is used as a conservative uncertainty proxy
	for safety reasoning and constraint tightening.
	In particular, safety guarantees rely only on a bounded
	one-step prediction error, rather than exact probabilistic calibration.
The resulting uncertainty captures modeling error induced by partial
observability and evolving interaction conditions, and serves as the primary
signal for safety-aware decision making with respect to the constraints defined
in \eqref{eq:safety_set}.

Given this probabilistic latent dynamics model, safety can be enforced by
planning directly over the predictive belief distribution.
This naturally leads to a predictive control formulation, instantiated here as
Bayesian Model Predictive Control (BMPC). 
Safety is enforced through a BMPC layer operating on the latent belief
\eqref{eq:latent_belief}.
{At each time step, the controller formulated in~\eqref{eq:control_policy}
	computes a horizon-$T$ control sequence by solving
	\begin{equation}
		\min_{u_{t:t+T-1}}
		\mathbb{E}\!\sum_{k=0}^{T-1}\ell(z_{t+k},u_{t+k})
		\;\;\text{s.t.}\;\;
		\Prb\!\left(\mathcal{G}(z_{t+k},u_{t+k})\!\le\!0\right)\!\ge\!1-\epsilon,
		\label{eq:horizon}
	\end{equation}
	where $\ell(\cdot)$ encodes tracking and comfort objectives.
	The probability constraint is evaluated using the predictive belief
		distribution obtained by propagating the current latent belief
	\eqref{eq:latent_belief} through the Deep SSSM dynamics~\eqref{eq:latentbelief}.
	Over the prediction horizon, the latent predictive mean and covariance
		$(\hat z_{t+k}, \Sigma_{t+k})$ are propagated recursively via the stochastic
		latent transition model. Specifically, at each prediction step $k$, the mean
		$\hat z_{t+k}$ evolves according to the predictive mean dynamics, while the
		stage-wise covariance $\Sigma_{t+k}=\Sigma_{\theta_t}(\hat z_{t+k},u_{t+k})$
		is obtained from the latent process-noise model.
	While predictive control, i.e., BMPC, governs how control inputs
	$u_t$ are selected safely—such that the safety constraints
	$(x_t, u_t) \in \mathcal{S}$ in \eqref{eq:safety_set} are satisfied with high
	probability—it does not by itself indicate when the underlying latent dynamics
	should be revised.
	To monitor the validity of the learned model during ongoing interaction, we
	introduce an instantaneous measure of \emph{surprise},
	\begin{equation}
		\mathcal{S}_t
		:= -\log p_{\theta_t}\!\left(o_{t+1}\mid z_t,u_t\right),
		\label{eq:surprise}
	\end{equation}
	which quantifies discrepancies between predicted and observed outcomes.
	After applying $u_t$ and observing $o_{t+1}$, the model parameters $\theta_t$ are updated via
	\begin{equation}
		\theta_{t+1}
		= \theta_t + \eta_t \nabla_\theta
		\log p_{\theta_t}\!\left(o_{t+1}\mid z_t,u_t\right),
		\label{eq:param_update}
	\end{equation}
	where the adaptation rate is modulated by $\mathcal{S}_t$.
	Large surprise values induce faster adaptation, while diminishing step sizes
	$\eta_t$ satisfying
	$0<\eta_t\le\eta_{\max}$,
	$\sum_t \eta_t=\infty$,
	and $\sum_t \eta_t^2<\infty$
	ensure bounded parameter drift and long-term stability.   
	Thus, representation refinement follows \eqref{eq:param_update}, 
	while safety is enforced by BMPC~\eqref{eq:horizon}.
	To ensure controlled reorganization of the latent belief, adaptation is
	explicitly regulated through the CF constraint
	\begin{equation}
		\mathbb{E}\!\left[\|\phi_{\theta_{t+1}}-\phi_{\theta_t}\|\right] \le \epsilon,
		\label{eq:cognitive_flexibility}
	\end{equation}
	which bounds the rate of change of the inference mapping and preserves predictive
	safety during online adaptation. 
	The CF constraint bounds the evolution of the inference mapping
	$\phi_{\theta_t}$ within the fixed Deep SSSM model structure
	(encoder, transition, and observation parameterization),
	thereby regulating the geometry of the latent belief space.
	Thus, the controller \eqref{eq:horizon} can respond to
	changes in interaction conditions—detected via elevated surprise
	\eqref{eq:surprise}—while preserving predictive safety for
	safety-critical operation.
	This constraint~\eqref{eq:cognitive_flexibility} is enforced through the
		surprise-driven parameter update~\eqref{eq:param_update}, where the
		step size $\eta_t$ ensures bounded parameter drift.
	

	\textbf{CF Index (CFI).}
	To quantify the instantaneous magnitude of representation reorganization,
	we define the empirical CFI 
	\begin{equation}
		\mathrm{CFI}_t := \frac{\|\phi_{\theta_{t+1}} - \phi_{\theta_t}\|}{\epsilon}.
		\label{eq:cfi}
	\end{equation}
	Under the expectation-based cognitive-flexibility constraint
	$\mathbb{E}[\|\phi_{\theta_{t+1}} - \phi_{\theta_t}\|] \le \epsilon$,
	we have $\mathbb{E}[\mathrm{CFI}_t] \le 1$.
	
	Overall, CF--DeepSSSM forms a closed-loop architecture that
	integrates latent-state inference, uncertainty-aware BMPC,
	and surprise-regulated adaptation. The resulting control
	pipeline and update loop are summarized in Fig.~\ref{fig:cfsssm_overview}
	and Algorithm~\ref{alg:cfdsssm}.

	\begin{figure}[t]
		\centering
		\resizebox{1.0\linewidth}{!}{%
				\begin{tikzpicture}[
					font=\small\sffamily,
					block/.style={
						draw,
						rounded corners=8pt,
						minimum width=3.8cm,
						minimum height=0.85cm,
						align=center,
						line width=0.6pt,
						drop shadow={opacity=0.08}
					},
					line/.style={-Latex, line width=0.9pt},
					node distance=2.8cm
					]
					\node[block, fill=gray!8] (obs)
					{\textbf{Sensing$/$Measurement}\\ \Large $o_t$ \eqref{eq:deepsssm}};
					
					\node[block, fill=blue!7, right=of obs] (latent)
					{\textbf{Deep SSSM Inference}\\ [3 pt]
						\large $z_t,\;\Sigma_t$ \eqref{eq:latentbelief}};
					
					\node[block, fill=orange!12, right=of latent] (mpc)
					{\textbf{Predictive Safety Control}\\ BMPC, \;\large $u_t$ \eqref{eq:horizon}-\eqref{eq:surprise}};
					
					
					\node[block, fill=gray!15, right=of mpc] (plant)
					{\textbf {Physical System}\\[3 pt]
						\normalsize $(\mathscr{A}_{\mathrm{env}},\,\mathscr{B}_{\mathrm{env}},\,\mathscr{C}_{\mathrm{env}})$ (sec.\ref{sec:simulation})};
					
					\node[block, fill=green!8, below=0.76cm of latent] (adapt)
					{\textbf{Cognitive Adaptation}\\ \large $\theta_t,\;\pi_t$ \eqref{eq:param_update}};
					
					\draw[line] (obs) -- node[above]{encode} (latent);
					\draw[line] (latent) -- node[above]{belief} (mpc);
					\draw[line] (mpc) -- node[above]{control} (plant);
					
					\draw[line] (plant.north) |- ++(-1.1,0.8) -|
					node[left,pos=0.65]{observations} (obs.north);
					
					\draw[line] (latent.south) -- node[left]{prediction error} (adapt.north);
					\draw[line] (adapt.east) -|
					node[below,pos=0.25]{bounded update} (mpc.south);
					\draw[line] (adapt.north) -| (latent.south);

				\end{tikzpicture}%
			}
			\caption{System overview of CF--DeepSSSM.} 
	\label{fig:cfsssm_overview}
\end{figure}

\begin{algorithm}[t]  
	\scriptsize
	\caption{\footnotesize CF--DeepSSSM Predictive Safety Control (Sec.~\ref{sec:method})\protect\footnotemark}
	\label{alg:cfdsssm}
	\begin{algorithmic}[1]
		\State Initialize belief $z_0$, parameters $\theta_0$
		\For{$t=0,1,2,\dots$}
		\State Infer latent belief
		$z_t \sim q_{\phi_t}(z_t \mid \mathcal{H}_t)$
		\hfill\textit{~\eqref{eq:deepsssm}}
		
		\State Compute safe control
		$u_t \leftarrow \mathrm{BMPC}(z_t,\Sigma_{z,t})$
		\hfill\textit{~\eqref{eq:horizon}}
		
		\State Observe $o_{t+1}$ and compute surprise
		$\mathcal{S}_t$ 
		\hfill\textit{~\eqref{eq:surprise}}
		
		\State Update model parameters
		$\theta_{t+1} $ 
		\hfill\textit{~\eqref{eq:param_update}}
		\EndFor
	\end{algorithmic}
\end{algorithm}
\footnotetext{Each iteration requires a forward inference pass,
		surprise evaluation, a single bounded parameter update,
		and solution of a convex quadratic BMPC problem.
		For state dimension $n$ and horizon $N$, a dense QP solve
		scales as $\mathcal{O}((nN)^3)$ in the worst case,
		with reductions from structure and sparsity.}

\section{Theoretical Foundations of CF--DeepSSSM} 
\label{sec:theory}


We analyze the closed-loop properties of the CF--DeepSSSM controller
introduced in Sec.~\ref{sec:method}. In particular, representation
reorganization is regulated by the CF constraint
\eqref{eq:cognitive_flexibility}, predictive safety is enforced through
belief-space BMPC~\eqref{eq:horizon}, and model adaptation is driven by
the surprise signal~\eqref{eq:surprise}. 

\paragraph{Cognitive-flexible latent dynamics (abstract analysis).}
This abstraction captures the effect of the surprise-driven updates in
\eqref{eq:surprise}--\eqref{eq:param_update} applied to the Deep SSSM model
\eqref{eq:deepsssm}, while explicitly separating latent state $z_{t+1}$ and model parameter $\theta_{t+1}$ evolutions from
representation evolution for theoretical analysis is formalized by the following dynamics:
\begin{equation}
z_{t+1} = f_{\theta_t}(z_t,u_t) + w_t, \qquad
\theta_{t+1} = \theta_t + \alpha_t \Delta_t .
\label{eq:flexible-latent}
\end{equation}
Here, $f_{\theta_t}$ denotes the predictive mean induced by the latent dynamics,
$\alpha_t \ge 0$ is a (possibly time-varying) step size, and $\Delta_t$ is a
bounded update direction driven by predictive surprise.

\begin{definition}[Bounded posterior drift]
The latent model update $\theta_{t+1}$ in \eqref{eq:flexible-latent} is stated to satisfy
\emph{cognitive regularity} if
$
\|\theta_{t+1}-\theta_t\| \le \rho(\mathcal{S}_t),
$
where 
$\rho(\cdot)$ is a nondecreasing function.
This condition ensures that representation reorganization is
data-justified and rate-limited.
\end{definition}

\paragraph{Belief uncertainty model.}
Consistent with the stochastic latent modeling introduced in~\eqref{eq:latentbelief}, belief evolution is represented by a probabilistic latent
dynamics model 
$
p_{\theta_t}(z_{t+1}\mid z_t,u_t)
= \mathcal{N}\!\big(f_{\theta_t}(z_t,u_t),
\Sigma_{\theta_t}(z_t,u_t)\big),
$
with the latent-space observation model $p_{\theta_t}(o_t\mid z_t)$ defined in \eqref{eq:deepsssm}.
Here, $f_{\theta_t}$ denotes the predictive mean parameterized by $\theta_t$. 
For analysis, we assume that online inference maintains a variational
factorization
$
p(z_t,\theta_t \mid o_{1:t})
\approx q_{\phi_t}(z_t)\,q_{\psi_t}(\theta_t),
$
where $q_{\phi_t}(z_t)$ denotes the variational posterior over $z_t$ and $q_{\psi_t}(\theta_t)$ denotes a variational belief over $\theta_t$. 
This mean-field approximation yields calibrated predictive uncertainty used
for safety reasoning.

\paragraph{Predictive safety mechanism.}
The BMPC policy introduced in Sec.~\ref{sec:method} enforces safety by
planning over the latent belief dynamics while respecting the state--input
constraints defined in Sec.~\ref{sec:Formulation}.
To account for modeling error arising from partial observability and ongoing
latent model adaptation, constraint satisfaction in \eqref{eq:horizon} is enforced
through adaptive tightening.
Specifically, each constraint $\mathcal{G}_i(\cdot)$ in \eqref{eq:safety_set} is
modified as $\mathcal{G}_i(z,u) \le -\beta_{i,t}$, where the tightening margin
$\beta_{i,t} = c_i\,\mathcal{S}_t$ scales with the predictive surprise
$\mathcal{S}_t$ in \eqref{eq:surprise}, and $c_i>0$ denotes a
constraint-specific sensitivity coefficient. 
Together, \eqref{eq:flexible-latent} and the predictive safety mechanism ensure
recursive feasibility of the belief-space control \eqref{eq:horizon} under
bounded adaptation.

\begin{assumption}[Model and safety regularity]
\label{ass:ms}
The latent dynamics $f_\theta(z,u)$ are Lipschitz in $(z,u)$ $\forall{\theta}$,
the process noise has bounded second moment, and the initial belief has bounded
support (or variance). The admissible set
$\mathcal{S}=\{(z,u): \mathcal{G}_i(z,u)\le0\}$ is compact (convex when required), and each
$\mathcal{G}_i$ is Lipschitz continuous.
\end{assumption}

\begin{assumption}[Incremental adaptation]
\label{ass:adapt}
Model updates are incremental, rewards are bounded, and latent estimation error
remains uniformly bounded during adaptation.
\end{assumption}

The following result shows that surprise-regulated adaptation
\eqref{eq:surprise}--\eqref{eq:param_update} bounds latent model reorganization
\eqref{eq:flexible-latent}, which is necessary to preserve predictive safety
under belief-space control \eqref{eq:horizon}.

\begin{theorem}[Bounded posterior drift]
\label{thm:drift}
Assume the update direction $\Delta_t$ is uniformly bounded,
$\|\Delta_t\|\le L_\Delta$ almost surely, and the adaptation rate satisfies
$\alpha_t \le \frac{\eta}{1+\mathcal{S}_t}$ with $\mathcal{S}_t\ge 0$.
Then, 
$
\|\theta_{t+1}-\theta_t\|\le \eta L_\Delta$, with $\eta>0$ is a design constant.  
\end{theorem}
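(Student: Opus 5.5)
The argument is a direct computation from the parameter recursion in \eqref{eq:flexible-latent}; no probabilistic machinery is needed beyond the almost-sure bound on $\Delta_t$. First, subtracting $\theta_t$ from both sides of the second relation in \eqref{eq:flexible-latent} gives $\theta_{t+1}-\theta_t=\alpha_t\Delta_t$. Then, by absolute homogeneity of the norm and $\alpha_t\ge 0$,
\begin{equation*}
\|\theta_{t+1}-\theta_t\| = \alpha_t\,\|\Delta_t\| \le \alpha_t\,L_\Delta \quad \text{almost surely}.
\end{equation*}

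Next, we control the step size using the surprise-modulated rate condition. Since $\mathcal{S}_t\ge 0$, we have $1+\mathcal{S}_t\ge 1$, so $\frac{1}{1+\mathcal{S}_t}\le 1$. Together with the hypothesis this yields $\alpha_t\le \frac{\eta}{1+\mathcal{S}_t}\le \eta$. Substituting this into the previous display gives
\begin{equation*}
\|\theta_{t+1}-\theta_t\|\le \frac{\eta L_\Delta}{1+\mathcal{S}_t}\le \eta L_\Delta .
\end{equation*}
This is the claim.

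Finally, I would remark that the intermediate bound is sharper and is the more informative one. Set $\rho(s):=\eta L_\Delta/(1+s)$. The update then satisfies the posterior-drift condition of the Definition with this $\rho$, which is a function of the surprise. Note, however, that this $\rho$ is nonincreasing rather than nondecreasing in $s$. Consequently the uniform constant $\eta L_\Delta$, viewed as a constant and hence trivially nondecreasing function, is the bound that fits the Definition verbatim.

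There is no real obstacle. The only point needing care is that $\mathcal{S}_t\ge 0$ must hold for the factor $1/(1+\mathcal{S}_t)$ to be at most one. A negative log-likelihood $-\log p$ can be negative for continuous densities, which is why $\mathcal{S}_t\ge 0$ is stated as a hypothesis rather than derived. A second point is that the inequality holds almost surely because the bound on $\Delta_t$ is almost sure. Taking expectations therefore also gives $\mathbb{E}\|\theta_{t+1}-\theta_t\|\le \eta L_\Delta$, which links the result to the expectation-based CF constraint \eqref{eq:cognitive_flexibility}, for instance when $\eta L_\Delta\le\epsilon$ and the map $\theta\mapsto\phi_\theta$ is $1$-Lipschitz.
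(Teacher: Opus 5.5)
Your proposal is correct and follows the paper's proof essentially line for line: $\theta_{t+1}-\theta_t=\alpha_t\Delta_t$, hence $\|\theta_{t+1}-\theta_t\|\le\alpha_t L_\Delta\le \eta L_\Delta/(1+\mathcal{S}_t)\le\eta L_\Delta$ using $\mathcal{S}_t\ge 0$. Your extra remarks are accurate and go slightly beyond what the paper says. First, the surprise-dependent bound is nonincreasing in $\mathcal{S}_t$, so only the constant bound fits the Definition's nondecreasing $\rho$. Second, $\mathcal{S}_t\ge 0$ really must be assumed, since $-\log p$ can be negative for continuous densities.
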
 
\begin{proof}
From the update \eqref{eq:flexible-latent} we have
$
\theta_{t+1}-\theta_t=\alpha_t\Delta_t
\quad\Rightarrow\quad
\|\theta_{t+1}-\theta_t\|=\|\alpha_t\Delta_t\|
\le \alpha_t\|\Delta_t\|.
$
By Assumption \ref{ass:adapt}, $\|\Delta_t\|\le L_\Delta$ a.s. and
$\alpha_t \le \frac{\eta}{1+\mathcal{S}_t}$ with $\mathcal{S}_t\ge 0$, hence
$
\|\theta_{t+1}-\theta_t\|
\le \frac{\eta}{1+\mathcal{S}_t}L_\Delta
\le \eta L_\Delta,
$
since $(1+\mathcal{S}_t)^{-1}\le 1$, $\forall{\mathcal{S}_t}\ge 0$.
Therefore $\|\theta_{t+1}-\theta_t\|\le \eta L_\Delta$, $\forall{t}$. \qedhere
\end{proof}

\begin{lemma}[Tightening dominates prediction mismatch]
\label{lem:tightening}
Suppose $\mathcal{G}_i(z,u)$ is $L_{g,i}$-Lipschitz in $z$, and the Deep SSSM
predictive distribution satisfies
$z_{t+1}\sim\mathcal{N}(\hat z_{t+1},\Sigma_t)$ with
$\sigma_t=\sqrt{\lambda_{\max}(\Sigma_t)}$.
If $\beta_{i,t} \ge L_{g,i}\sigma_t$, then
$\mathcal{G}_i(\hat z_{t+1},u_t)\le -\beta_{i,t}$
implies $\mathcal{G}_i(z_{t+1},u_t)\le 0$
with probability at least $1-\delta_i$, and $\delta_i\in(0,1)$ denote the allowable violation probability of constraint $i$.
\end{lemma}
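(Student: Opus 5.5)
The plan is to convert the deterministic tightened constraint at the predicted mean into a probabilistic statement about the true next latent state. We combine the Lipschitz property of $\mathcal{G}_i$ with a Gaussian concentration bound on the prediction error $e_{t+1}:=z_{t+1}-\hat z_{t+1}\sim\mathcal{N}(0,\Sigma_t)$.

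\emph{Step 1 (deterministic reduction).} By $L_{g,i}$-Lipschitz continuity in $z$ at fixed $u_t$,
\begin{equation*}
\mathcal{G}_i(z_{t+1},u_t)\le \mathcal{G}_i(\hat z_{t+1},u_t)+L_{g,i}\|e_{t+1}\|\le -\beta_{i,t}+L_{g,i}\|e_{t+1}\|.
\end{equation*}
Hence the violation event $\{\mathcal{G}_i(z_{t+1},u_t)>0\}$ is contained in $\{\|e_{t+1}\|>\beta_{i,t}/L_{g,i}\}$. Since $\beta_{i,t}\ge L_{g,i}\sigma_t$, this event is in turn contained in $\{\|e_{t+1}\|>\sigma_t\,r_{i,t}\}$ with $r_{i,t}:=\beta_{i,t}/(L_{g,i}\sigma_t)\ge 1$.

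\emph{Step 2 (whitening and concentration).} Write $e_{t+1}=\Sigma_t^{1/2}\xi$ with $\xi\sim\mathcal{N}(0,I_k)$. Then $\|e_{t+1}\|\le\sqrt{\lambda_{\max}(\Sigma_t)}\,\|\xi\|=\sigma_t\|\xi\|$, and therefore
\begin{equation*}
\Prb\big(\mathcal{G}_i(z_{t+1},u_t)>0\big)\le \Prb\big(\|\xi\|>r_{i,t}\big)=\Prb\big(\chi^2_k>r_{i,t}^2\big).
\end{equation*}
Defining $\delta_i$ as this chi-square tail, or bounding it by a standard Gaussian norm inequality such as $\Prb(\|\xi\|>\sqrt{k}+s)\le e^{-s^2/2}$, gives the claim with probability at least $1-\delta_i$ and $\delta_i\in(0,1)$.

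\emph{Step 3 (main obstacle).} With only $\beta_{i,t}\ge L_{g,i}\sigma_t$, i.e.\ $r_{i,t}\ge1$, the resulting $\delta_i=\Prb(\chi^2_k>1)$ is a fixed number that is not small for $k\ge1$. So the lemma holds for a prescribed $\delta_i$ only if the margin carries a confidence multiplier. I would state this explicitly: take $\beta_{i,t}\ge L_{g,i}\sigma_t\,\kappa_k(\delta_i)$ with $\kappa_k(\delta_i):=\sqrt{F^{-1}_{\chi^2_k}(1-\delta_i)}$, or the cruder $\kappa=\sqrt{k}+\sqrt{2\log(1/\delta_i)}$. Equivalently, $\delta_i$ is interpreted as the tail induced by the chosen ratio $\beta_{i,t}/(L_{g,i}\sigma_t)$.

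A sharper alternative, if $\mathcal{G}_i$ is affine or differentiable, is to linearize along the gradient direction. This replaces the $k$-dimensional norm by a scalar Gaussian and gives the one-dimensional quantile $\Phi^{-1}(1-\delta_i)$. I would mention this but keep the Lipschitz/norm argument as the main proof, since it uses only Assumption~\ref{ass:ms}.
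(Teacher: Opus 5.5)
Your proof is correct and shares the paper's skeleton. On the event $\|z_{t+1}-\hat z_{t+1}\|\le\sigma_t$, Lipschitz continuity gives $\mathcal{G}_i(z_{t+1},u_t)\le\mathcal{G}_i(\hat z_{t+1},u_t)+L_{g,i}\sigma_t\le 0$.

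The difference is the probabilistic step. The paper never computes it. It simply asserts that this one-step error bound ``holds with probability at least $1-\delta_i$ under the Deep SSSM predictive distribution,'' which in effect defines $\delta_i$ as the failure probability of that event. You instead derive it from the stated Gaussian hypothesis by whitening, obtaining $\Prb(\mathcal{G}_i(z_{t+1},u_t)>0)\le\Prb(\chi^2_k>r_{i,t}^2)$.

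Your Step~3 diagnosis is right: with only $\beta_{i,t}\ge L_{g,i}\sigma_t$, the lemma is false for a prescribed small $\delta_i$. Your bound is even attained within the lemma's hypotheses. Take $\Sigma_t=\sigma_t^2 I$ and a ball constraint of radius $\sigma_t$ centred at $\hat z_{t+1}$, i.e.\ $\mathcal{G}_i(z,u)=L_{g,i}(\|z-\hat z_{t+1}\|-\sigma_t)$. Its violation probability is exactly $\Prb(\chi^2_k>1)$, about $0.61$ for $k=2$. An affine constraint with $k=1$ already gives $1-\Phi(1)\approx 0.16$. So a confidence multiplier such as your $\kappa_k(\delta_i)$, or $\Phi^{-1}(1-\delta_i)$ when $\mathcal{G}_i$ is affine in $z$, is genuinely needed. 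The paper's version buys brevity by leaving this calibration assumption implicit; yours yields an honest margin.

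One caveat on your side remark: the scalar-Gaussian refinement is exact only when $\mathcal{G}_i$ is affine in $z$. For merely differentiable $\mathcal{G}_i$, linearization is an approximation unless you add a curvature remainder term.
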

\begin{proof}
Fix any constraint $i$ and time $t$. By Lipschitz continuity of
$\mathcal{G}_i$ and the one-step prediction error bound,
$
\mathcal{G}_i(z_{t+1},u_t)
\le \mathcal{G}_i(\hat z_{t+1},u_t) + L_{g,i}\sigma_t .
$
If the tightened constraint satisfies
$\mathcal{G}_i(\hat z_{t+1},u_t)\le -\beta_{i,t}$ with
$\beta_{i,t}\ge L_{g,i}\sigma_t$, then
$\mathcal{G}_i(z_{t+1},u_t)\le 0$.
Thus, whenever the prediction error bound holds, feasibility of the
tightened constraint implies feasibility of the true constraint.
Since the bound holds with probability at least $1-\delta_i$ under the
Deep SSSM predictive distribution, we obtain
$\Prb(\mathcal{G}_i(z_{t+1},u_t)\le 0)\ge 1-\delta_i$.
\end{proof}

From Lemma~\ref{lem:tightening}, the predictive-safety mechanism can be stated explicitly as follows.
The Deep SSSM yields a predictive covariance $\Sigma_t$,
which induces a one-step belief error bound
$\sigma_t = \sqrt{\lambda_{\max}(\Sigma_t)}$.
The tightening margin $\beta_{i,t}$ is selected as a function of this
uncertainty bound.
Accordingly, feasibility of the tightened constraint
$\mathcal{G}_i(\hat z_{t+1},u_t)\le -\beta_{i,t}$
implies feasibility of the original constraint
$\mathcal{G}_i(z_{t+1},u_t)\le 0$
with probability at least $1-\delta_i$.
Hence, predictive uncertainty is systematically translated into
a probabilistic safety guarantee.


\begin{theorem}[Recursive feasibility]
\label{thm:feasibility}
Let $\mathcal{Z}_f$ be a terminal set and
$\kappa_f:\mathcal{Z}_f\!\to\!\mathbb{R}^m$ satisfy
$(z,\kappa_f(z))\in\mathcal{S}$, $\forall z\in\mathcal{Z}_f$.
Consider~\eqref{eq:horizon} with tightened constraints
$G_i(\hat z_{t+k|t},u_{t+k|t})\le-\beta_{i,t+k}$,
$k=0,\dots,T-1$, where $\beta_{i,t+k}$ follows Lemma~\ref{lem:tightening}
under Assumption~\ref{ass:ms}.
If~\eqref{eq:horizon} is feasible at time $t$, then it is feasible at $t+1$,
hence $\forall t'\ge t$.
\end{theorem}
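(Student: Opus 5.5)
The plan is the standard shifted-sequence argument from tube/constraint-tightening MPC, adapted to the belief-space setting. Let $u^\star_{t:t+T-1|t}$ be an optimal (or merely feasible) solution of \eqref{eq:horizon} at time $t$, with predicted means $\hat z_{t+k|t}$ and terminal mean $\hat z_{t+T|t}\in\mathcal{Z}_f$. At time $t+1$ we take as candidate
\[
\tilde u_{t+k|t+1}=u^\star_{t+k|t},\quad k=1,\dots,T-1,\qquad
\tilde u_{t+T|t+1}=\kappa_f(\tilde z_{t+T|t+1}).
\]
Here $\tilde z_{\cdot|t+1}$ are the means propagated from the new belief $\hat z_{t+1|t+1}$ under the updated model $f_{\theta_{t+1}}$. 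Feasibility at $t+1$ then reduces to three checks: the tightened stage constraints hold along the candidate, the appended terminal step is admissible, and the terminal state lands in $\mathcal{Z}_f$.

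\textbf{Step 1: bound the deviation of the candidate trajectory.} The new predicted means differ from the old ones for two reasons: the realized one-step innovation $e_{t+1}=\hat z_{t+1|t+1}-\hat z_{t+1|t}$, and the model change $\theta_{t+1}-\theta_t$. Using the Lipschitz property of $f_\theta$ in $z$ from Assumption~\ref{ass:ms} (constant $L_f$), plus Lipschitz dependence on $\theta$ (constant $L_\theta$, which I would add as part of the regularity), a discrete Grönwall recursion gives
\[
\|\tilde z_{t+k|t+1}-\hat z_{t+k|t}\|\le L_f^{k-1}\|e_{t+1}\|+\sum_{j=0}^{k-2}L_f^{j}L_\theta\,\eta L_\Delta .
\]
The term $\eta L_\Delta$ is exactly the drift bound of Theorem~\ref{thm:drift}. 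On the event of Lemma~\ref{lem:tightening}, $\|e_{t+1}\|\le\sigma_t$, and Assumption~\ref{ass:adapt} keeps this estimation error uniformly bounded.

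\textbf{Step 2: show the tightened constraints still hold.} By Lipschitz continuity of $\mathcal{G}_i$,
\[
\mathcal{G}_i(\tilde z_{t+k|t+1},u)\le -\beta_{i,t+k}+L_{g,i}\times(\text{deviation}).
\]
To conclude, the margins must be nested: $\beta_{i,t+k|t}-\beta_{i,t+k|t+1}$ must dominate $L_{g,i}$ times the Step~1 bound. This is the step I expect to be the main obstacle. The statement only says that $\beta$ ``follows Lemma~\ref{lem:tightening}'', i.e.\ $\beta\ge L_{g,i}\sigma$, and that alone does not give nesting. I would resolve it by reading Lemma~\ref{lem:tightening} as fixing a horizon-indexed choice of margins of the form
\[
\beta_{i,t+k}\ge L_{g,i}\Big(\textstyle\sum_{j<k}L_f^j\bar\sigma+\sum_{j<k}L_f^jL_\theta\eta L_\Delta\Big),
\]
where $\bar\sigma$ is a uniform bound on $\sigma_t$, which exists by Assumptions~\ref{ass:ms}--\ref{ass:adapt}. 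With this choice the difference between consecutive margins absorbs the propagated error, and the argument goes through on the probability-$(1-\delta_i)$ event. The surprise-scaled choice $\beta_{i,t}=c_i\mathcal{S}_t$ would need $c_i$ large enough for this to hold.

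\textbf{Step 3: close the terminal step.} The last step requires $\mathcal{Z}_f$ to be robustly positively invariant under $\kappa_f$ for the bounded perturbation of Step~1. In other words, the terminal set should absorb the same error bound, possibly after replacing $\mathcal{Z}_f$ by a tightened subset, with compactness of $\mathcal{S}$ used to get margins. Given this, $(\tilde z_{t+T|t+1},\kappa_f(\tilde z_{t+T|t+1}))$ satisfies the tightened constraints and the successor state lies in $\mathcal{Z}_f$. Hence the candidate is feasible at $t+1$, and induction on $t'$ yields feasibility for all $t'\ge t$, in the same probabilistic sense as Lemma~\ref{lem:tightening}.
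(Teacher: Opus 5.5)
Your route is the paper's own: shift the optimal sequence and append $\kappa_f$. The paper's proof simply asserts that this candidate satisfies the tightened constraints, and it calls $\mathcal{Z}_f$ a ``terminal invariant set.'' So the nesting and robust-invariance conditions you isolate, together with Lipschitz dependence on $\theta$, are exactly what that argument uses without saying so. You are also right that $\beta_{i,t}\ge L_{g,i}\sigma_t$ plus mere admissibility of $\kappa_f$ do not supply them.

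The gap is that your concrete repair of Step~2 does not work. Write $\beta^{(k)}_i$ for the margin at prediction index $k$, and $D_k=L_f^{k-1}\bar\sigma+L_\theta\eta L_\Delta\sum_{j=0}^{k-2}L_f^j$ for your Step~1 deviation bound. Old stage $k$ becomes stage $k-1$ at time $t+1$, so nesting requires $\beta^{(k)}_i-\beta^{(k-1)}_i\ge L_{g,i}D_k$. Your schedule increments by only $L_{g,i}L_f^{k-1}(\bar\sigma+L_\theta\eta L_\Delta)$. Its drift part would therefore need $L_f^{k-1}\ge\sum_{j=0}^{k-2}L_f^j$. That already fails at $k=3$ unless $L_f\ge(1+\sqrt{5})/2$, and fails at every $k\ge2$ if $L_f<1$. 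The reason is that the parameter change perturbs every step of the re-simulated tail. It does not enter once and get propagated like the innovation, so it must be accumulated twice, e.g.\ $\beta^{(k)}_i=L_{g,i}\sum_{m=1}^{k}D_m$. The terminal ingredients must then satisfy $\mathcal{G}_i(z,\kappa_f(z))\le-\beta^{(T-1)}_i$ and $f_\theta(z,\kappa_f(z))\in\mathcal{Z}_f$ for all $z$ within distance $D_T$ of $\mathcal{Z}_f$ and all admissible $\theta$. Your aside that a large $c_i$ would make $\beta_{i,t}=c_i\mathcal{S}_t$ work is also wrong. Nesting compares margins imposed at consecutive times, and $\mathcal{S}_{t+1}$ can exceed $\mathcal{S}_t$ arbitrarily right after a shift. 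The time-$(t+1)$ constraints then become stricter than what the shifted sequence satisfies, and a larger $c_i$ only enlarges that jump.

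The second gap is the probabilistic induction. You run Step~2 ``on the probability-$(1-\delta_i)$ event'' of Lemma~\ref{lem:tightening} and then induct on $t'$, but these one-step events cannot be chained. Feasibility at $t'$ needs the error bound to hold at every $s=t,\dots,t'-1$ simultaneously. That intersection is not bounded below by $1-\delta_i$: its probability is nonincreasing in $t'$ and, given the Gaussian model's unbounded support, typically tends to zero. Once the problem becomes infeasible, the recursion cannot restart. The claim ``feasible for all $t'\ge t$'' therefore needs an almost-sure bound $\|e_{s+1}\|\le\bar\sigma$, with the margins built from that bound. Note that $e_{s+1}$ is the innovation, i.e.\ the lemma's prediction error plus the estimation error, so $\sigma_s$ alone does not bound it. Such a bound can come from the uniform estimation-error bound of Assumption~\ref{ass:adapt} combined with bounded-support noise (the bounded $d_t$ of Theorem~\ref{thm:iss}). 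The lemma's probabilistic event belongs only in the chance-constraint conclusion of Corollary~\ref{cor:safety}, not in the feasibility recursion.
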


\begin{proof}
	Let $\mathcal Z_f$ denote a terminal invariant set and
	$\kappa_f:\mathcal Z_f\rightarrow\mathbb R^m$ a terminal admissible control
	law such that $(z,\kappa_f(z))\in\mathcal S$ for all $z\in\mathcal Z_f$.
	Consider the predictive control problem~\eqref{eq:horizon} defined on the belief dynamics
	$
	z_{t+1} 
	$ in~\eqref{eq:flexible-latent}
  with tightened constraints $G_i(z_{t+k},u_{t+k})\le-\beta_{i,t+k}$ obtained
	from Lemma~\ref{lem:tightening}.
  Assume the problem is feasible at time $t$ and let
	$
	\mathbf u_t^\star=\{u_{t|t}^\star,\dots,u_{t+T-1|t}^\star\}
	$
	be an optimal feasible control sequence with predicted states
	$\{\hat z_{t+k|t}^\star\}_{k=1}^T$ satisfying the tightened constraints and
	$\hat z_{t+T|t}^\star\in\mathcal Z_f$.
	Applying the standard MPC shift argument, consider the candidate sequence
	at time $t+1$
	$
	\tilde{\mathbf u}_{t+1}
	=
	\{u_{t+1|t}^\star,\dots,u_{t+T-1|t}^\star,\kappa_f(\hat z_{t+T|t}^\star)\}.
	$
	Under Assumption~\ref{ass:ms}, the one-step prediction error induced by the
	process noise $w_t$ and the model mismatch of the belief dynamics is bounded.
	By Lemma~\ref{lem:tightening}, the tightening margin $\beta_{i,t}$ is chosen
	so that this prediction mismatch is dominated by the tightening bound.
	Hence whenever the tightened constraint
	$
	G_i(\hat z_{t+1|t},u_{t|t})\le-\beta_{i,t}
	$
	holds, the true constraint
	$
	G_i(z_{t+1},u_t)\le0
	$
	is satisfied with probability at least $1-\delta_i$.
	Since the shifted sequence preserves feasibility of the tightened problem and
	the terminal condition $\hat z_{t+T|t}^\star\in\mathcal Z_f$ together with the
	terminal control $\kappa_f$ guarantees admissibility at the terminal step,
	the candidate sequence is feasible at time $t+1$.
	Therefore feasibility at time $t$ implies feasibility at time $t+1$, which
	establishes recursive feasibility.
	\end{proof}
	

	\begin{theorem}[ISS under cognitive-flexible adaptation]
\label{thm:iss}
Under Assumptions~\ref{ass:ms}--\ref{ass:adapt}, the closed-loop belief
dynamics can be written as
$z_{t+1}=f_{\theta_t}(z_t,u_t)+d_t$, where $d_t$ is a bounded disturbance
($\|d_t\|\le \bar d$) aggregating the effects of process noise,
model mismatch, and bounded parameter variation.
Then there exist a class-$\mathcal{KL}$ function $\beta$ and a
class-$\mathcal{K}$ function $\gamma$ such that for all $t\ge0$,
$\|z_t\|\le \beta(\|z_0\|,t)+\gamma(\bar d)$.
Equivalently, the MPC value function $V_t(\cdot)$ satisfies the ISS decrease
condition
$V_{t+1}(z_{t+1})-V_t(z_t)\le -\alpha(\|z_t\|)+\gamma(\|d_t\|)$
for some class-$\mathcal{K}$ functions $\alpha,\gamma$.
\end{theorem}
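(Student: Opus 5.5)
We will use the standard Lyapunov argument for tube/tightened MPC, with the optimal value function $V_t$ of \eqref{eq:horizon} (tightened constraints, terminal set $\mathcal Z_f$, terminal law $\kappa_f$) as an ISS-Lyapunov candidate. Two extra ingredients are assumed, both standard in MPC but not spelled out in Assumption~\ref{ass:ms}. First, the stage cost is positive definite, $\ell(z,u)\ge\alpha_1(\|z\|)$. Second, there is a terminal cost $V_f$ with $V_f(f_\theta(z,\kappa_f(z)))-V_f(z)\le-\ell(z,\kappa_f(z))$ on $\mathcal Z_f$, uniformly in $\theta$. From these, compactness of $\mathcal S$, and Lipschitz continuity of $f_\theta$, we obtain sandwich bounds
\[
\alpha_1(\|z\|)\le V_t(z)\le\alpha_2(\|z\|)
\]
on the feasible region. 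The upper bound comes from evaluating the terminal-law sequence near the origin and extending it by compactness. Theorem~\ref{thm:feasibility} guarantees that the feasible region is forward invariant, so $V_t$ is well defined along the closed loop.

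The decrease condition follows from the shift argument in the proof of Theorem~\ref{thm:feasibility}. At time $t+1$ we use the candidate $\tilde{\mathbf u}_{t+1}$ and compare its cost with $V_t(z_t)$. Two discrepancies must be controlled.

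\textbf{(a) Disturbance.} The realized state is $z_{t+1}=\hat z_{t+1|t}+d_t$ rather than the nominal prediction. By Lipschitz continuity of $f_\theta$ (constant $L_f$), the perturbed predicted trajectory deviates from the nominal one by at most $L_f^{k}\|d_t\|$ at step $k$. Since $\ell$ and $V_f$ are Lipschitz on the compact set $\mathcal S$, the cost changes by at most $c_1\sum_k L_f^k\|d_t\|$.

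\textbf{(b) Parameter change.} The model changes from $\theta_t$ to $\theta_{t+1}$. By Theorem~\ref{thm:drift}, $\|\theta_{t+1}-\theta_t\|\le\eta L_\Delta$. With Lipschitz dependence on $\theta$, this is a bounded perturbation of the predicted dynamics, which the statement already absorbs into $d_t$.

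Combining these with the terminal decrease gives
\[
V_{t+1}(z_{t+1})\le V_t(z_t)-\ell(z_t,u_t)+\gamma(\|d_t\|)\le V_t(z_t)-\alpha_1(\|z_t\|)+\gamma(\|d_t\|),
\]
with $\gamma$ linear, hence class-$\mathcal K$. This is the stated ISS decrease condition.

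The trajectory bound then follows from the standard comparison lemma (Jiang--Wang discrete-time ISS-Lyapunov theorem). Setting $\alpha=\alpha_1$, the sandwich bounds give $V_{t+1}\le V_t-\alpha\circ\alpha_2^{-1}(V_t)+\gamma(\bar d)$. This yields $V_t\le\tilde\beta(V_0,t)+\tilde\gamma(\bar d)$, and inverting through $\alpha_1,\alpha_2$ produces the class-$\mathcal{KL}$ function $\beta$ and the class-$\mathcal K$ gain.

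The main obstacle is that the value function is time-varying, because $\theta_t$ and the surprise-dependent tightening $\beta_{i,t}$ change with $t$. The sandwich bounds and the shift-candidate feasibility must therefore hold uniformly in $t$. If the tightening grows at $t+1$, the shifted candidate may fail the new, tighter constraints. We would handle this by bounding surprise through Assumption~\ref{ass:adapt}, which bounds the estimation error. This makes $\beta_{i,t}\le\bar\beta$ and the admissible set for $\theta$ compact, so the comparison functions can be taken uniform over $\theta$. Any residual tightening mismatch, which is Lipschitz in $\|d_t\|$ through $\mathcal S_t$, is then absorbed into $\gamma$.
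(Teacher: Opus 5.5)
Your proposal follows the same route as the paper's proof: you use the optimal value function of the tightened BMPC problem as an ISS-Lyapunov function, obtain the decrease from the shifted candidate of Theorem~\ref{thm:feasibility}, fold the parameter variation bounded by Theorem~\ref{thm:drift} into the bounded disturbance $d_t$, and apply the standard ISS-MPC argument---you simply carry out explicitly what the paper only invokes, including the terminal-cost and positive-definiteness ingredients it leaves implicit and the comparison-lemma step from the decrease condition to the $\mathcal{KL}$ bound. One caution: drop the closing idea that a tightening mismatch can be ``absorbed into $\gamma$,'' because an infeasible shifted candidate gives no upper bound on $V_{t+1}(z_{t+1})$ at all, and a uniform bound $\beta_{i,t}\le\bar\beta$ does not stop the margin from increasing between $t$ and $t+1$; instead rely on the shifted-candidate feasibility asserted in Theorem~\ref{thm:feasibility}, or impose the constant worst-case margin $\bar\beta$ in the constraints.
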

\begin{proof}
Under the terminal ingredients required for recursive feasibility
	(Theorem~\ref{thm:feasibility}), the optimal value function of the MPC
	problem~\eqref{eq:horizon},
  $
	V_t(z_t)=\min_{u_{t:t+T-1}}
	\mathbb E\!\left[\sum_{k=0}^{T-1}\ell(z_{t+k},u_{t+k})\right],
	$
	acts as a Lyapunov function for the nominal belief dynamics
	$z_{t+1}=f_{\theta_t}(z_t,u_t)$ in~\eqref{eq:flexible-latent}.
   When the predictive model~\eqref{eq:control_policy} is imperfect and updated online, the closed-loop
	belief evolution follows
	$
	z_{t+1}=f_{\theta_t}(z_t,u_t)+w_t ,
	$
	where the disturbance term $w_t$ captures both process noise and prediction
	mismatch induced by model uncertainty.
	Moreover, the parameter update
	$
	\theta_{t+1}=\theta_t+\alpha_t\Delta_t
	$
	introduces a time-varying perturbation to the dynamics.
	By Theorem~\ref{thm:drift}, the parameter variation satisfies
	$
	\|\theta_{t+1}-\theta_t\|\le \eta L_\Delta ,
	$
	which implies that the induced modeling error in
	$f_{\theta_t}(z_t,u_t)$ remains uniformly bounded under
	Assumption~\ref{ass:ms}.
   Hence the closed-loop belief dynamics can be written as the nominal
	system plus a bounded disturbance
	$
	z_{t+1}=f_{\theta_t}(z_t,u_t)+d_t ,
	\qquad \|d_t\|\le \bar d .
	$
	Using the standard ISS-MPC argument, the value function decrease
	condition holds in the form
	$
	V_{t+1}(z_{t+1})-V_t(z_t)
	\le
	-\alpha(\|z_t\|)+\gamma(\|d_t\|),
	$
	for class-$\mathcal K$ functions $\alpha,\gamma$.
	Since $d_t$ is bounded, the inequality establishes that the closed-loop
	belief dynamics are input-to-state stable with respect to bounded modeling
	error and bounded parameter drift induced by cognitive-flexible adaptation.
	\end{proof}


\begin{corollary}[Safety preservation]
\label{cor:safety}
Under Assumptions~\ref{ass:ms} and Theorem~\ref{thm:feasibility}, the
closed-loop sequence $\{(z_t,u_t)\}_{t\ge0}$ generated by the BMPC law
satisfies
$
(z_t,u_t)\in\mathcal{S}_z,\qquad \forall t\ge0,
$
where $\mathcal{S}_z := \{(z,u)\mid G_i(z,u)\le0,\ i=1,\dots,q\}$.
Equivalently, the stage-wise tightened constraints in~\eqref{eq:horizon}
hold for all prediction steps $k=0,\dots,T-1$ along the implemented closed loop.
Moreover, by Lemma~\ref{lem:tightening}, satisfaction of the tightened
constraints implies the physical chance constraint
$\eqref{eq:prob_safety}$, $\forall t\ge0$ with violation level $\delta$.
\end{corollary}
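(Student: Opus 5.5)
The plan is an induction on $t$ that uses the recursive feasibility of Theorem~\ref{thm:feasibility}, followed by a passage from the tightened nominal constraints to the true ones via Lemma~\ref{lem:tightening}.

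\emph{Base case.} Assume, as Theorem~\ref{thm:feasibility} does, that the tightened problem \eqref{eq:horizon} is feasible at $t=0$. Take the optimal sequence at $t=0$. Its first stage gives $G_i(z_0,u_0)\le-\beta_{i,0}\le 0$ for every $i$. This holds deterministically because $z_0$ is the current (known) belief, so $(z_0,u_0)\in\mathcal{S}_z$.

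\emph{Inductive step.} Suppose the problem is feasible at time $t$. By Theorem~\ref{thm:feasibility}, it is also feasible at $t+1$. Applying the same first-stage argument at $t+1$ gives $G_i(z_{t+1},u_{t+1})\le-\beta_{i,t+1}\le0$. Induction therefore yields $(z_t,u_t)\in\mathcal{S}_z$ for all $t\ge0$. The same recursion also shows that every stage-wise tightened constraint $k=0,\dots,T-1$ of the optimal sequence at each time $t$ holds along the implemented loop, which is the ``equivalently'' clause. The nonnegativity $\beta_{i,t}=c_i\mathcal{S}_t\ge0$ is what allows the tightened inequality to be relaxed to the untightened one. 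Here I take $\mathcal{S}_t\ge0$, as in Theorem~\ref{thm:drift}.

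\emph{Chance-constraint claim.} For the one-step transition from $\hat z_{t+1|t}$ to the realized $z_{t+1}$, apply Lemma~\ref{lem:tightening} with $\beta_{i,t}\ge L_{g,i}\sigma_t$. This gives $\Prb(G_i(z_{t+1},u_t)\le0)\ge1-\delta_i$ for each $i$. A union bound over the $q$ constraints then gives $\Prb((z_t,u_t)\in\mathcal{S}_z)\ge1-\sum_i\delta_i$. Choosing $\sum_i\delta_i\le\delta$ yields the violation level $\delta$.

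To move from latent to physical space, I would invoke the convention after \eqref{eq:predictive_physical}: constraints are evaluated under the belief-induced measure over $x_t$. With the Lipschitz $\mathcal{G}_i$ of Assumption~\ref{ass:ms}, any decoder error can be folded into $\sigma_t$, so that \eqref{eq:prob_safety} holds for all $t$.

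\emph{Main obstacle.} The hardest point is reconciling the two parts of the statement. The deterministic claim $(z_t,u_t)\in\mathcal{S}_z$ concerns the realized belief, while the shift argument of Theorem~\ref{thm:feasibility} only controls the realized $z_{t+1}$ up to the prediction error. So the deterministic part is valid only on the event where that error bound holds. I would first try to condition on this event, using the bounded prediction error granted by Assumption~\ref{ass:adapt}. The deterministic part then holds on that event, and the probabilistic statement is what survives unconditionally.
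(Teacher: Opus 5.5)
Your argument is essentially the paper's: recursive feasibility from Theorem~\ref{thm:feasibility}, then the $k=0$ tightened constraint with $\beta_{i,t}\ge L_{g,i}\sigma_t\ge 0$ to get $(z_t,u_t)\in\mathcal{S}_z$ (this makes your extra assumption $\mathcal{S}_t\ge 0$ unnecessary), and Lemma~\ref{lem:tightening} for the chance constraint---and your union bound with $\sum_i\delta_i\le\delta$ is the correct aggregation, repairing the paper's choice $\delta:=\max_i\delta_i$, which does not control the joint event. Two small adjustments: to get the time-aligned pair $(x_t,u_t)$ for every $t\ge 0$, apply the Lipschitz-plus-margin argument of Lemma~\ref{lem:tightening} to the $k=0$ stage at time $t$, as the paper does, because your one-step application only bounds $(z_{t+1},u_t)$ and gives nothing at $t=0$; and the conditioning in your last paragraph is unnecessary, since the corollary takes Theorem~\ref{thm:feasibility} as a hypothesis (your concern is a fair criticism of that theorem's deterministic shift argument, not of this corollary).
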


\begin{proof}
By Theorem~\ref{thm:feasibility}, the tightened MPC problem~\eqref{eq:horizon} is recursively feasible.
Hence, for each time $t$ there exists a feasible control sequence
$\mathbf u_t^\star := \{u_{t|t}^\star,\dots,u_{t+T-1|t}^\star\}$
with predicted states $\{z_{t+k|t}^\star\}_{k=0}^{T}$ such that the stage-wise tightened constraints
$G_i(z_{t+k|t}^\star,u_{t+k|t}^\star)\le -\beta_{i,t+k}$ hold for all $k=0,\dots,T-1$ along the predicted trajectory.
The controller applies the first element $u_t := u_{t|t}^\star$, so in particular the $k=0$ constraint satisfies
$G_i(z_{t|t}^\star,u_{t|t}^\star)\le -\beta_{i,t}$.
By Lemma~\ref{lem:tightening}, satisfaction of the tightened constraint at time $t$ implies that the physical chance constraint holds at time $t$, i.e.,
$\mathbb{P}\big(G_i(x_t,u_t)\le 0\big)\ge 1-\delta_i$ for each constraint $i$.
Equivalently, with $\delta := \max_i \delta_i$, we obtain
$\mathbb{P}\big((x_t,u_t)\in\mathcal S\big)\ge 1-\delta$, $\forall{t}$.
\end{proof}

\begin{figure*}[t]
	\centering
	\begin{subfigure}[t]{0.16\textwidth}
		\centering
		\includegraphics[width=\linewidth]{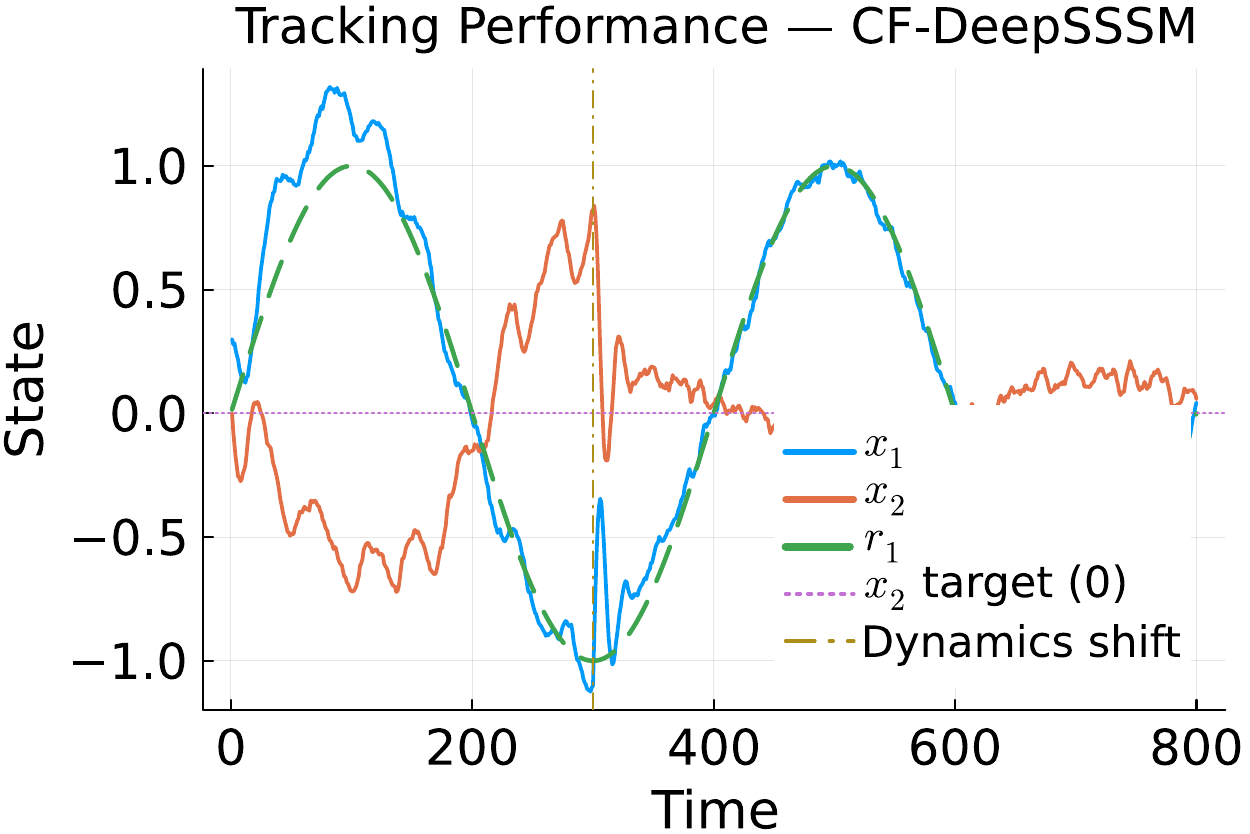}
		\caption{Tracking response.} 
\label{fig:S1_tracking}
\end{subfigure}\hfill
\begin{subfigure}[t]{0.16\textwidth}
\centering
\includegraphics[width=\linewidth]{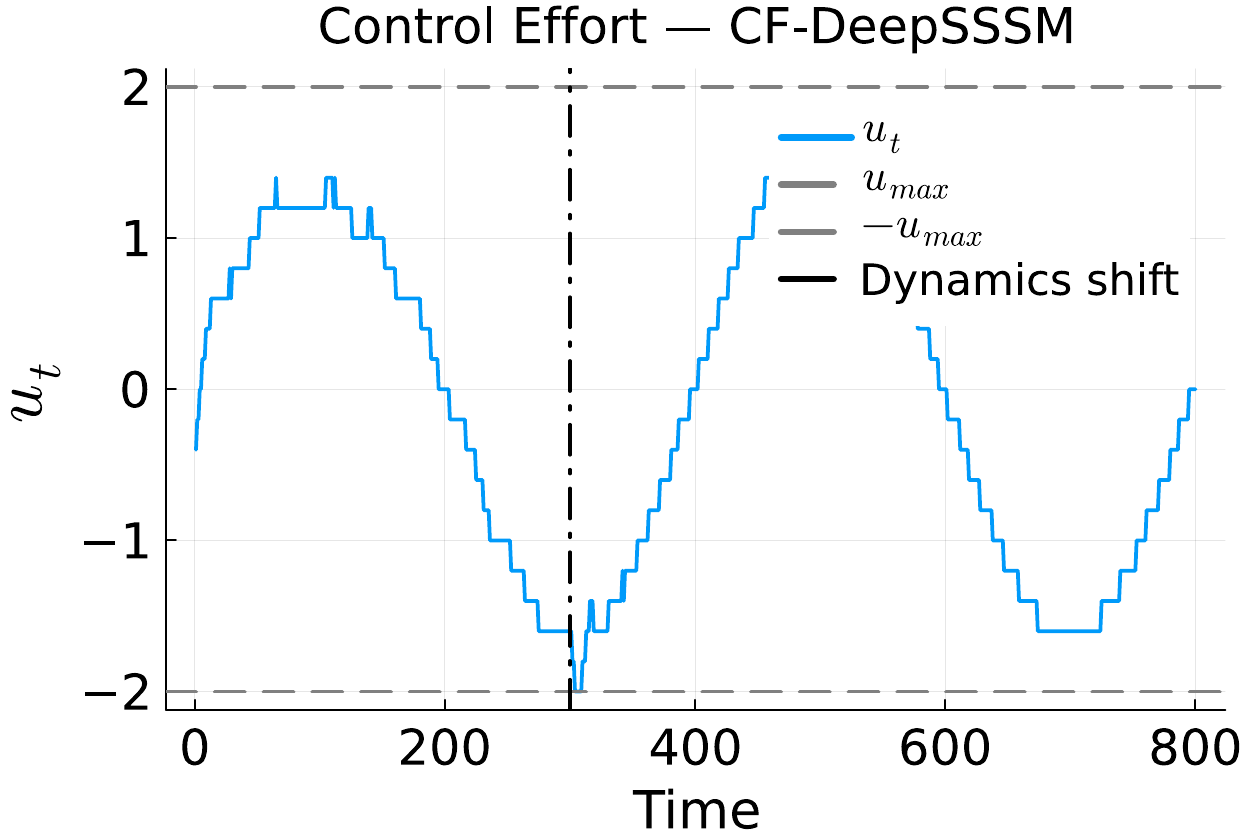}
\caption{Control input.} 
\label{fig:S1_control}
\end{subfigure}\hfill
\begin{subfigure}[t]{0.16\textwidth}
\centering
\includegraphics[width=\linewidth]{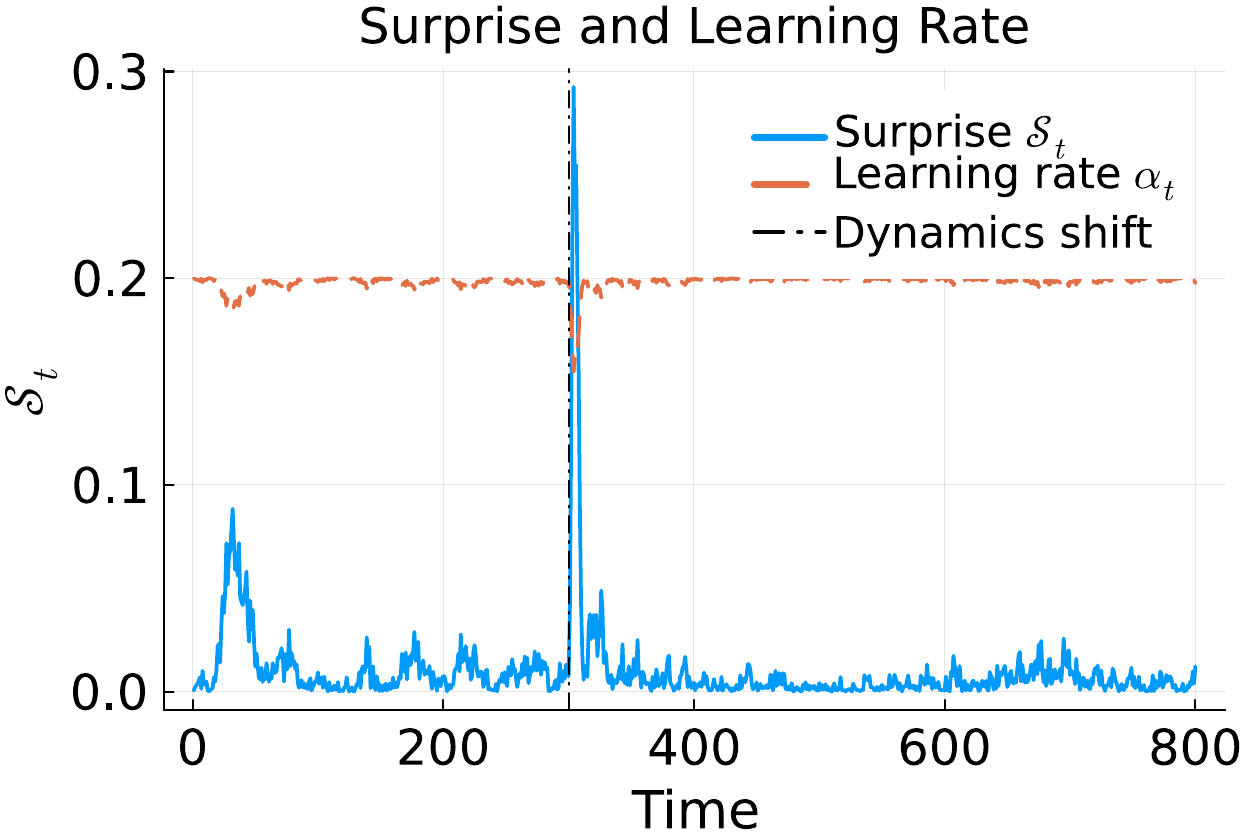}
\caption{$\mathcal{S}_t$ and $\alpha_t$.} 
\label{fig:S1_surprise}
\end{subfigure}\hfill
\begin{subfigure}[t]{0.16\textwidth}
\centering
\includegraphics[width=\linewidth]{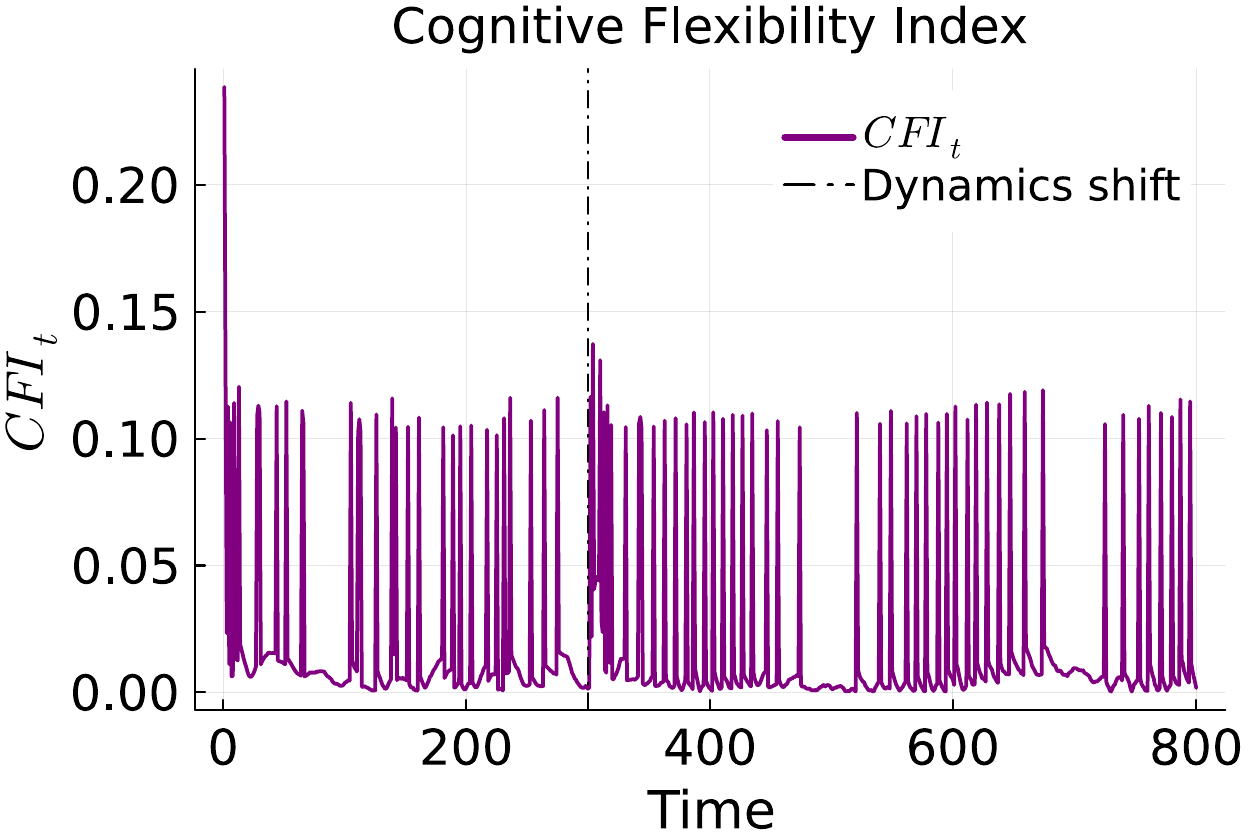}
\caption{CFI$_t$.} 
\label{fig:S1_cfi}
\end{subfigure}\hfill
\begin{subfigure}[t]{0.16\textwidth}
\centering
\includegraphics[width=\linewidth]{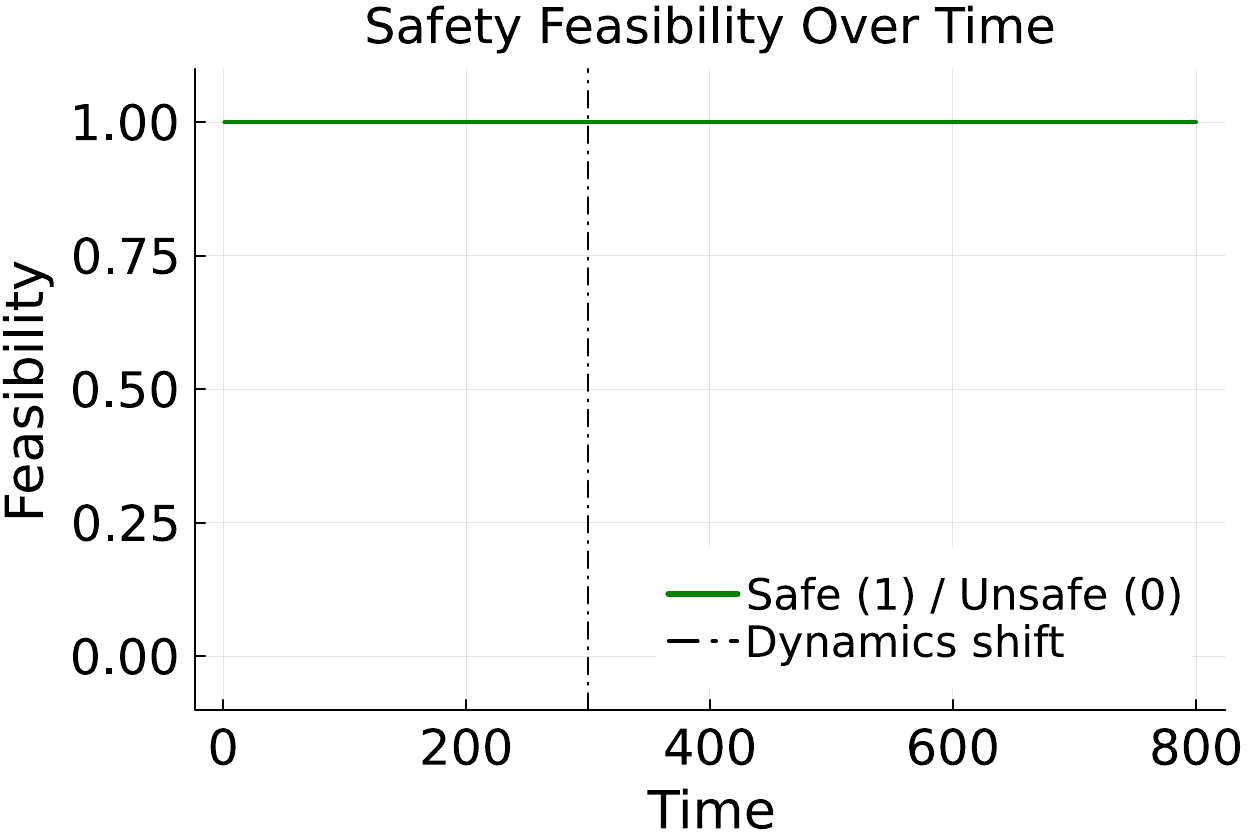}
\caption{Safety.} 
\label{fig:S1_safety}
\end{subfigure}\hfill
\begin{subfigure}[t]{0.16\textwidth}
\centering
\includegraphics[width=\linewidth]{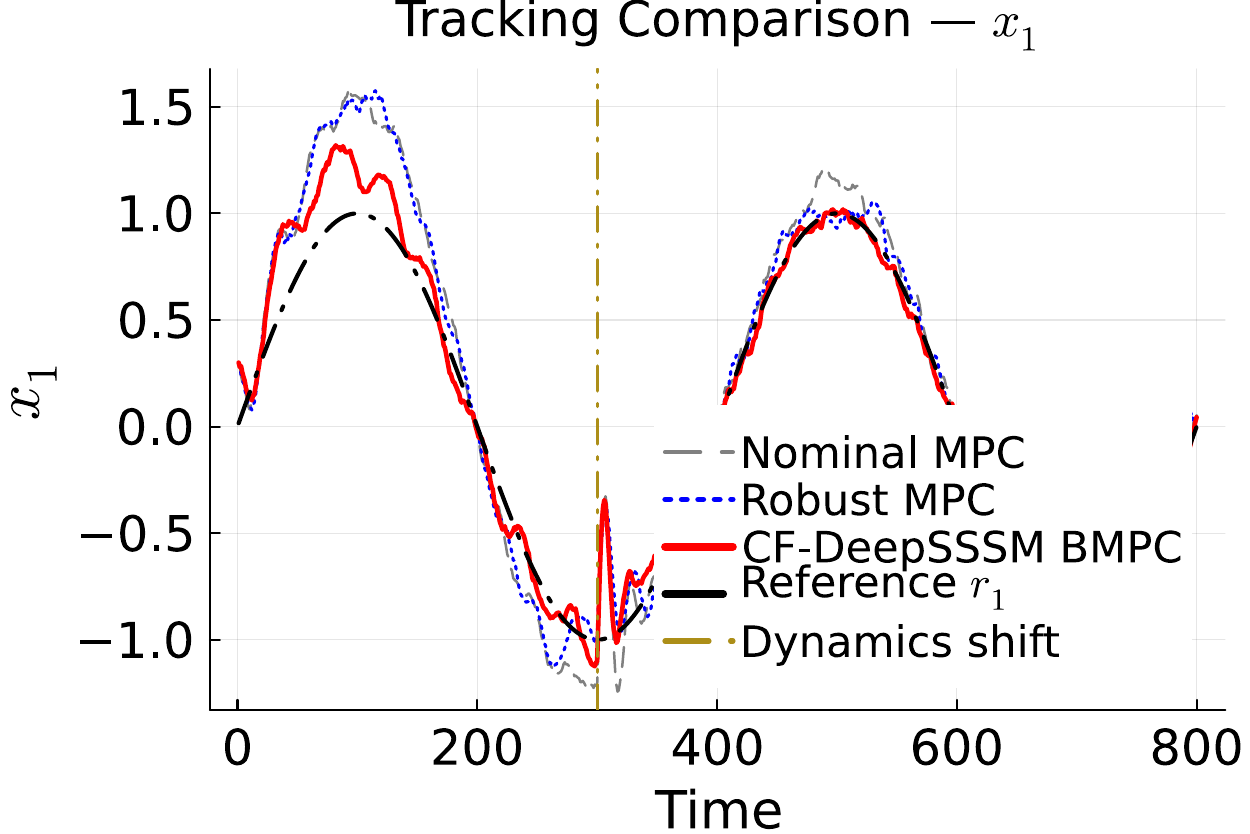}
\caption{Comparison.} 
\label{fig:S1_compare}
\end{subfigure}

\caption{Scenario~\ref{sec:scenario1}—Closed-loop response to an abrupt
dynamics shift at $t{=}300$ (vertical dashed line).
}
\label{fig:S1_overview}
\end{figure*}

\section{Simulation Studies}
\label{sec:simulation}

We validate the proposed CF--DeepSSSM BMPC controller on a
nonlinear, partially observed system with state $x_t \in \mathbb{R}^2$:
$
x_{t+1} =
\mathscr{A}_{\mathrm{env}}(t)x_t
+ \mathscr{B}_{\mathrm{env}}u_t
+ \Delta(x_t)
+ \omega_t,
\qquad
y_t =
\mathscr{C}_{\mathrm{env}}(t)x_t
+ \nu_t,
$
where $u_t \in \mathbb{R}$ and $y_t \in \mathbb{R}^2$.
We use the bounded interaction nonlinearity
$
\Delta(x_t)=
\begin{bmatrix}
0.05\tanh(x_{1,t})\\
0
\end{bmatrix}.
$
Process and measurement disturbances are zero-mean Gaussian,
$
\omega_t \sim \mathcal{N}(0,\sigma_w^2 I_2),
\qquad
\nu_t \sim \mathcal{N}(0,\sigma_v^2 I_2),
$
with $\sigma_w = 0.01$ and $\sigma_v = 0.02$.
Unless otherwise specified,
$\mathscr{A}_1 =
\begin{bmatrix}
0.97 & 0.08\\
-0.12 & 0.96
\end{bmatrix},
\quad
\mathscr{A}_2 =
\begin{bmatrix}
0.90 & 0.24\\
-0.18 & 0.91
\end{bmatrix},
$
$\mathscr{B}_{\mathrm{env}} =
\begin{bmatrix}
0.05\\
0.10
\end{bmatrix},
\quad
\mathscr{C}_0 = I_2,
$
where $\mathscr{A}_1$ is Schur stable and $\mathscr{A}_2$
remains Schur stable but exhibits stronger cross-state coupling,
inducing a structured model mismatch. 
We set $t_s = 300$, horizon $N = 10$,
cost $Q=\mathrm{diag}(8,3)$ and $R=0.05$,
and enforce CF with $\eta_{\max}=0.15$.
The task requires $x_1$ to track a sinusoidal reference
(amplitude $1$, frequency $0.02\pi$), while $x_2$ is regulated to zero,
subject to $|x_1|\le3$, $|x_2|\le3$, and $|u_t|\le2$. 
Starting from $\theta_0$, CF--DeepSSSM updates
$(\mathscr{A}_t,\mathscr{B}_t,\mathscr{C}_t)$ via bounded surprise-driven steps
(Sec.~\ref{sec:theory}; Theorem~\ref{thm:drift}). The simulations evaluate the
resulting closed-loop behavior under three representative model mismatches:
abrupt dynamics shifts, observation drift, and gradual dynamics evolution.

\subsection{Scenario \ref{sec:scenario1} — Abrupt Dynamics Shift}
\label{sec:scenario1}

At $t=t_s$, the environment dynamics switch abruptly
$\mathscr{A}_{\mathrm{env}}(t): \mathscr{A}_1 \rightarrow \mathscr{A}_2$,
while the observation mapping remains nominal,
$\mathscr{C}_{\mathrm{env}}(t)=\mathscr{C}_0$ for all $t$.
This models a sudden variation in actuator or contact dynamics,
isolating a pure dynamics-level distributional shift under reliable sensing.

\noindent\textit{Results and discussion.}
Figure~\ref{fig:S1_overview} summarizes the closed-loop response to an abrupt dynamics switch at $t=t_s=300$.
Before the switch, the controller achieves accurate tracking and stable regulation
under the nominal dynamics $\mathscr{A}_1$ (Fig.~\ref{fig:S1_tracking}).
Immediately after $\mathscr{A}_1\!\to\!\mathscr{A}_2$, a transient in~\ref{fig:S1_tracking} degradation appears due to
structural mismatch between  
$\mathscr{A}_{\mathrm{env}}(t)$ and 
$p_{\theta}(z_{t+1}\mid z_t,u_t)$ in~\eqref{eq:latentbelief},
which is reflected by a sharp increase in the predictive surprise $S_t$ (Fig.~\ref{fig:S1_surprise}),
thereby activating the online update in~\eqref{eq:param_update} with step size $\eta_t$.
By Theorem~\ref{thm:drift}, the parameter variation remains uniformly
bounded, as reflected by the localized burst and decay of $CFI_t$
(Fig.~\ref{fig:S1_cfi}), indicating incremental latent reorganization.
Despite this adaptation phase, the applied input remains within the actuation bounds
(Fig.~\ref{fig:S1_control}), and state--input feasibility is preserved for all $t$
(Fig.~\ref{fig:S1_safety}), consistent with Lemma~\ref{lem:tightening} and recursive
feasibility in Theorem~\ref{thm:feasibility} (hence Corollary~\ref{cor:safety}).
Moreover, the closed-loop belief dynamics remain ISS with respect to
bounded modeling error and bounded parameter drift, as established in
Theorem~\ref{thm:iss}, which is consistent with the bounded tracking
response in Fig.~\ref{fig:S1_tracking}. Figure~\ref{fig:S1_compare}
further compares the controllers: nominal MPC exhibits persistent
tracking bias under the fixed model, while robust MPC maintains
feasibility through fixed tightening but remains conservative.
In contrast, CF-DeepSSSM restores tracking performance through
\emph{surprise-driven representation adaptation} while preserving
predictive safety throughout the transition.


\subsection{Scenario \ref{sec:scenario2} — Observation Drift}
\label{sec:scenario2}

This scenario isolates \emph{latent representation reorganization}
under sensing degradation.
The physical dynamics remain fixed,
$\mathscr{A}_{\mathrm{env}}(t)=\mathscr{A}_1$ for all $t$,
while the observation mapping drifts after $t_s$.
Specifically, we set
$\mathscr{C}_{\mathrm{env}}(t)=\mathscr{C}_0$ for $t<t_s$,
and for $t\ge t_s$ apply a smooth \emph{observation drift}
$
\mathscr{C}_{\mathrm{env}}(t)
=
\mathscr{C}_0
+
\kappa(t)\,\Delta \mathscr{C},
\qquad
\kappa(t)
=
1-\exp\!\left(-\frac{t-t_s}{\tau_c}\right).
$
Unless otherwise stated, we use
$
\Delta \mathscr{C}
=
\begin{bmatrix}
0.35 & 0.10\\
-0.10 & 0.25
\end{bmatrix},
\qquad
\tau_c=80.
$
Such observation drift may arise from sensor miscalibration,
viewpoint changes, or tactile bias, distorting the measurement--state
mapping while leaving the physical dynamics unchanged.


\begin{figure}[t]
\centering
\begin{subfigure}[t]{0.25\textwidth}
\centering
\includegraphics[height=1.6cm]{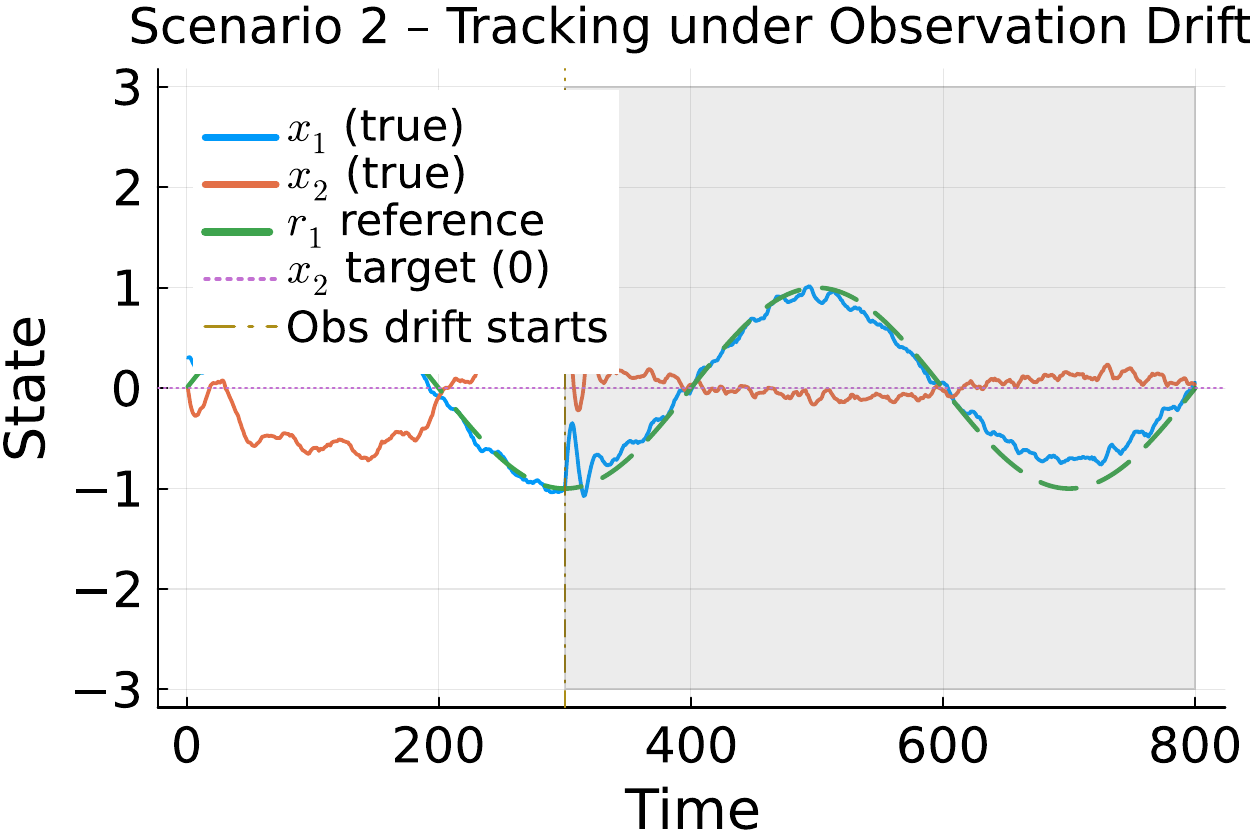}
\caption{State tracking.} 
\label{fig:S2_tracking}
\end{subfigure}
\begin{subfigure}[t]{0.25\textwidth}
\centering
\includegraphics[height=1.6cm]{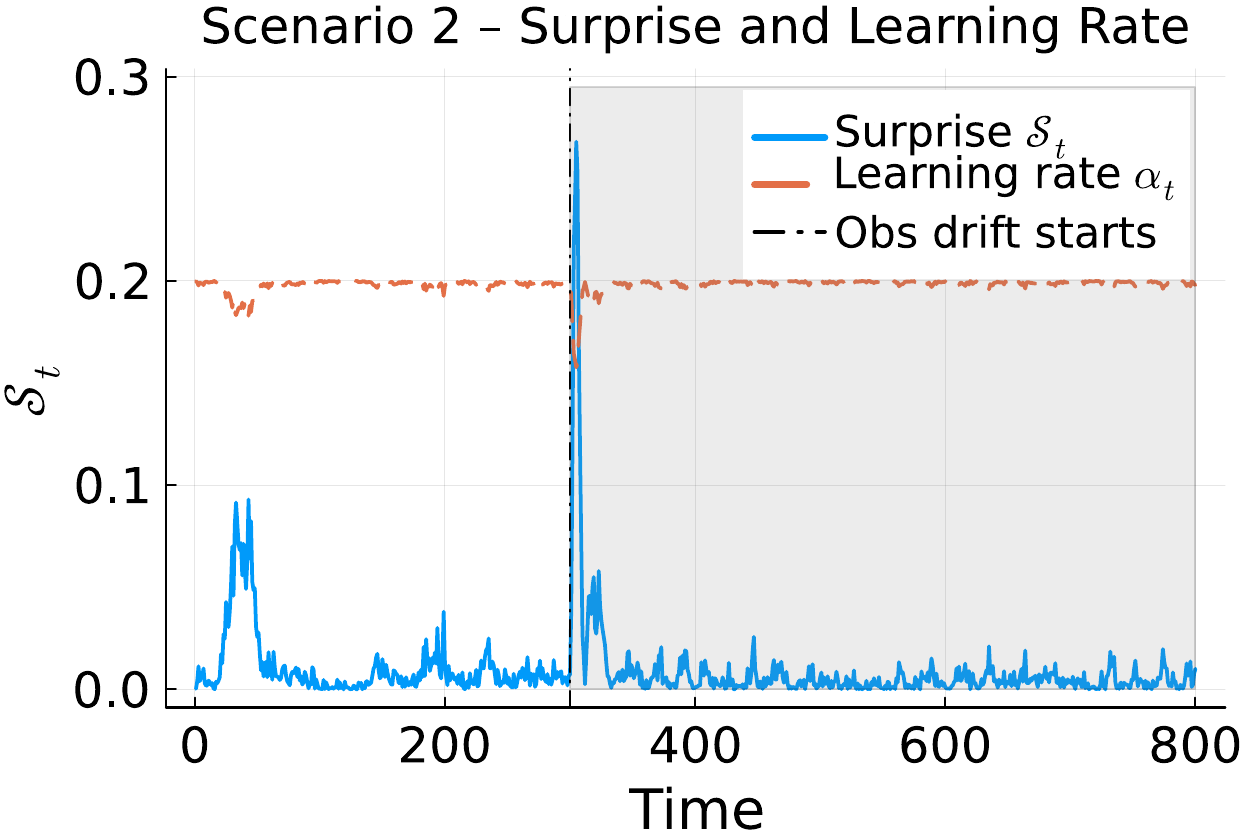}
\caption{$\mathcal{S}_t$ and bounded learning.}
\label{fig:S2_surprise}
\end{subfigure}\hfill
\begin{subfigure}[t]{0.25\textwidth}
\centering
\includegraphics[height=1.6cm]{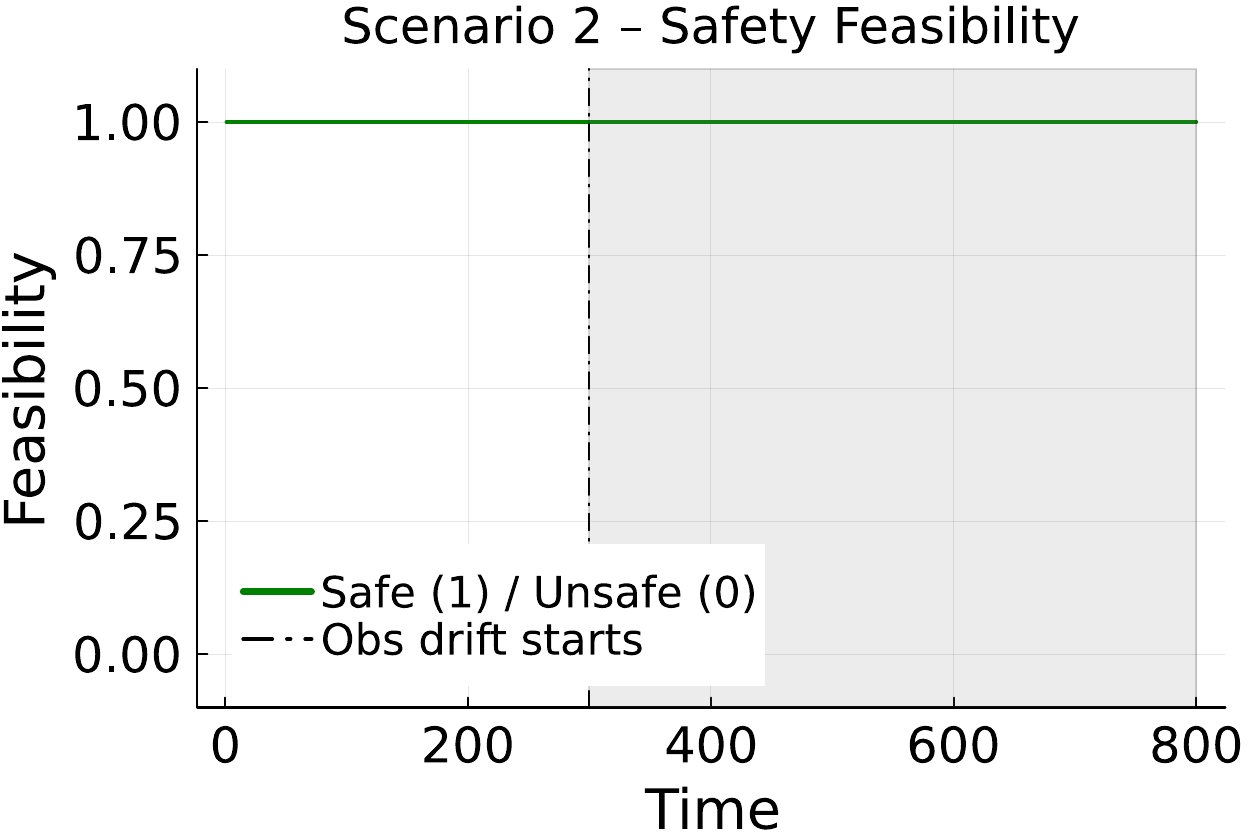}
\caption{Safety feasibility.} 
\label{fig:S2_safety}
\end{subfigure}
\begin{subfigure}[t]{0.25\textwidth}
\centering
\includegraphics[height=1.6cm]{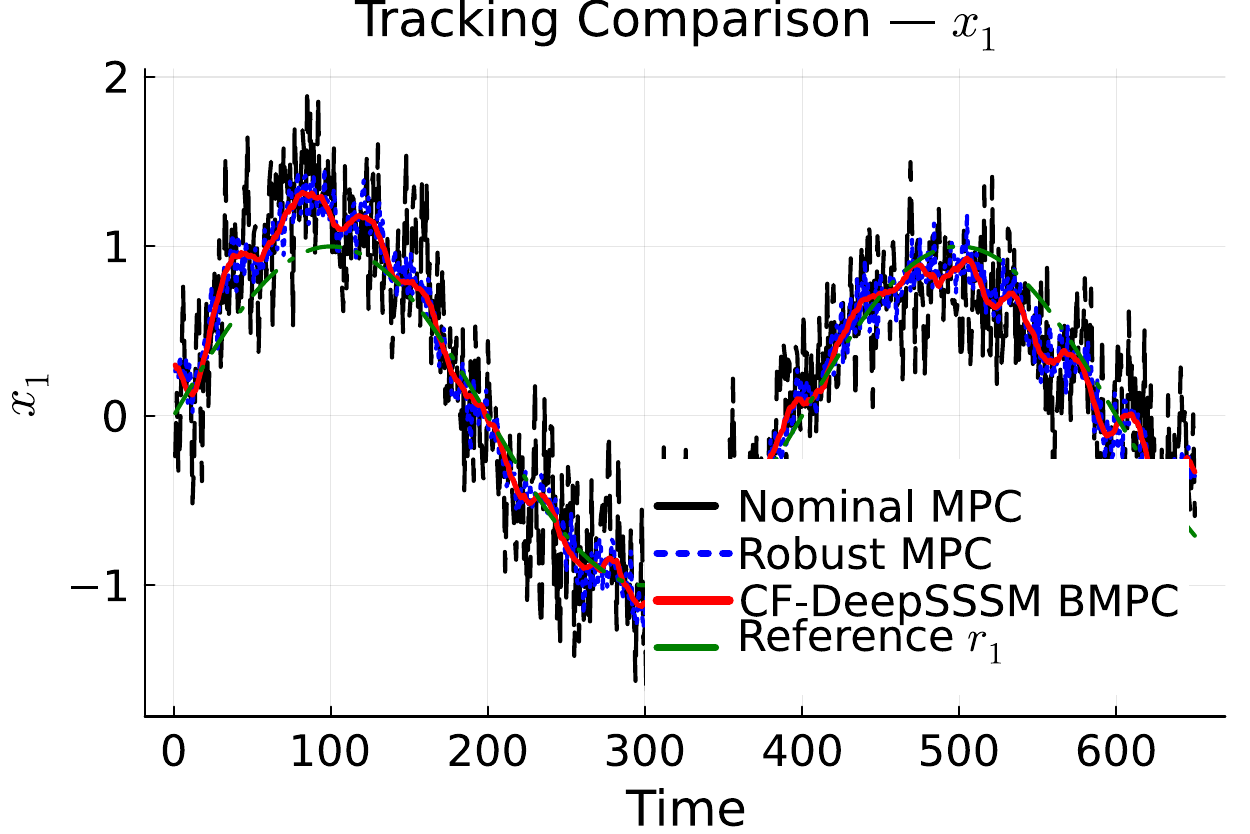}
\caption{Tracking comparison.} 
\label{fig:S2_compare}
\end{subfigure}

\caption{Scenario~\ref{sec:scenario2} --- Observation drift after $t=300$ (shaded region).}
\label{fig:scenario2_overview}
\end{figure}

\noindent\textit{Results and discussion.}
Figure~\ref{fig:scenario2_overview} evaluates the controller under
progressive observation drift starting at $t_s=300$, where the
measurement-to-state mapping deviates from the nominal model while
the physical dynamics remain unchanged.
Fig.~\ref{fig:S2_tracking} shows that CF--DeepSSSM maintains accurate
tracking despite the corrupted sensing channel. Unlike
Scenario~\ref{sec:scenario1}, where adaptation compensates for a
dynamics mismatch, the shift here arises solely from the observation
channel; adaptation therefore acts on the observation model that
maps measurements to the latent belief.
The predictive surprise $S_t$ (Fig.~\ref{fig:S2_surprise}) rises after
drift onset and triggers the bounded update in~\eqref{eq:param_update},
correcting the resulting inference bias. Fig.~\ref{fig:S2_safety}
confirms that state--input constraints remain satisfied, consistent
with uncertainty-aware tightening (Lemma~\ref{lem:tightening}) and
recursive feasibility (Theorem~\ref{thm:feasibility}).
Fig.~\ref{fig:S2_compare}, 
nominal MPC relies on the biased belief and exhibits
persistent tracking error, while robust MPC preserves feasibility
but cannot correct the perception bias. CF--DeepSSSM instead
restores tracking by adapting the observation mapping.
\noindent\textbf{Additional baselines and ablations.}
We further evaluate: (i) \emph{Fixed DeepSSSM} (no adaptation),
(ii) adaptation without cognitive-flexibility rate limiting, and
(iii) adaptation without uncertainty-aware tightening.
Results (Fig.}~\ref{fig:S2_compare}, Fig.~\ref{fig:scenario_vc_tracking})
show that removing either mechanism degrades tracking or violates
constraints, whereas the full CF--DeepSSSM maintains performance and feasibility.

\subsection{Scenario \ref{sec:scenario3} --- Gradual Dynamics Drift}
\label{sec:scenario3}

We consider a gradual drift in the dynamics matrix
$\mathscr{A}_{\mathrm{env}}(t)$ starting at $t_s$, emulating \emph{slowly
varying interaction} conditions (cf. Fig.~\ref{fig:scenario_vc_overview}).
The observation mapping remains nominal,
$\mathscr{C}_{\mathrm{env}}(t)=\mathscr{C}_0$, and
$
\mathscr{A}_{\mathrm{env}}(t)
=
\mathscr{A}_1+\kappa(t)(\mathscr{A}_2-\mathscr{A}_1),\qquad
\kappa(t)=1-\exp\!\left(-\frac{t-t_s}{\tau_a}\right),
$
with $\kappa(t)=0$ for $t<t_s$ and $\tau_a=120$ unless otherwise stated.
Such drift may arise from evolving contact properties, actuator heating,
or stiffness changes, and therefore tests whether CF--DeepSSSM can
sustain latent adaptation, feasibility, and tracking under slowly
accumulating model mismatch.

\noindent\textit{Initialization and safe early phase.}
Parameters are initialized with the offline-pretrained model $\theta_0$
obtained in the nominal (pre-drift) regime (Scenario~\ref{sec:scenario1}).
After drift onset, updates satisfy the bounded learning-rate condition
of Theorem~\ref{thm:drift}, while safety is maintained by uncertainty-aware
tightening and recursive feasibility (Lemma~\ref{lem:tightening},
Theorem~\ref{thm:feasibility}), implying state--input safety for all $t$
(Corollary~\ref{cor:safety}).

\noindent\textit{Mild nonlinear interaction.}
To emulate realistic interaction effects while preserving stability, we
augment the drift dynamics with a small bounded state-dependent perturbation,
$
x_{t+1}
=
\mathscr{A}_{\mathrm{env}}(t)x_t+\mathscr{B}_{\mathrm{env}}u_t+w_t+
\begin{bmatrix}
\rho \tanh(x_{1,t})\\
0
\end{bmatrix},
$
where $\rho>0$ is small and $w_t\sim\mathcal{N}(0,\sigma_w^2 I)$.
The term $\tanh(\cdot)$ models a bounded interaction nonlinearity
(e.g., compliance or contact saturation). Together with the gradual drift
in $\mathscr{A}_{\mathrm{env}}(t)$, it induces a persistent prediction bias
over the horizon, providing a controlled setting to evaluate
representation adaptation under sustained dynamics variation.

\textit{Results and discussion.}
Figure~\ref{fig:scenario_vc_overview} evaluates the controller under
gradual dynamics drift, where $\mathscr{A}_{\mathrm{env}}(t)$ evolves
continuously rather than switching abruptly as in Scenario~\ref{sec:scenario1}.
Fig.~\ref{fig:scenario_vc_tracking} shows that fixed-model baselines
cannot compensate for this progressive mismatch: nominal MPC exhibits
persistent tracking degradation, while robust MPC preserves feasibility
via fixed tightening but remains conservative. CF--DeepSSSM instead adapts the latent predictor online via 
~\eqref{eq:param_update},
allowing the belief dynamics induced by~\eqref{eq:latentbelief}
to track the drifting environment. Fig.~\ref{fig:scenario_vc_cfi} reports
the CFI, which quantifies the magnitude
of representation updates. Despite the continuous drift, the index
remains bounded and localized, indicating incremental \emph{belief
reorganization}. This behavior is consistent with the bounded posterior
drift guarantee in Theorem~\ref{thm:drift}.


\begin{figure}[t]
\centering

\begin{subfigure}[t]{0.48\linewidth}
\centering
\includegraphics[width=\linewidth,height=1.9cm,keepaspectratio]
{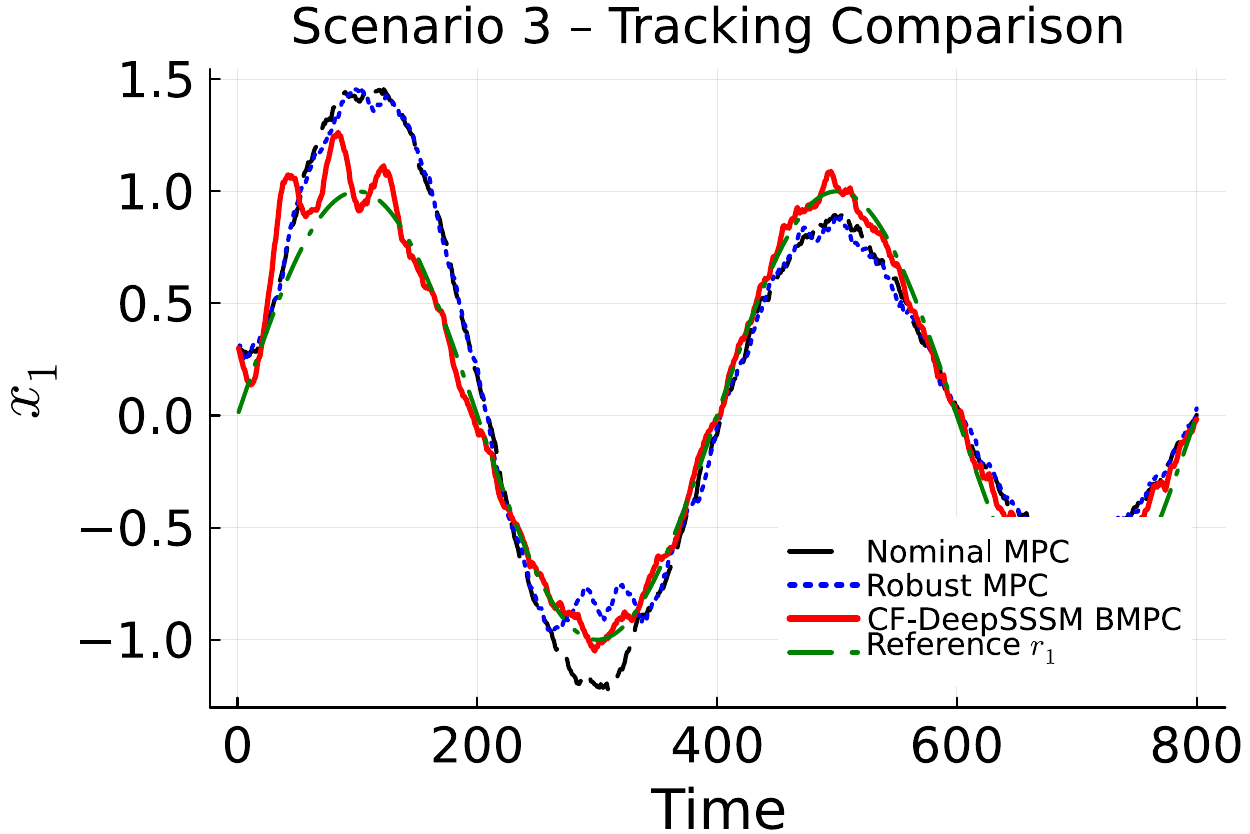}
\caption{Baseline comparison.} 
\label{fig:scenario_vc_tracking}
\end{subfigure}
\hfill
\begin{subfigure}[t]{0.48\linewidth}
\centering
\includegraphics[width=\linewidth,height=1.9cm,keepaspectratio]
{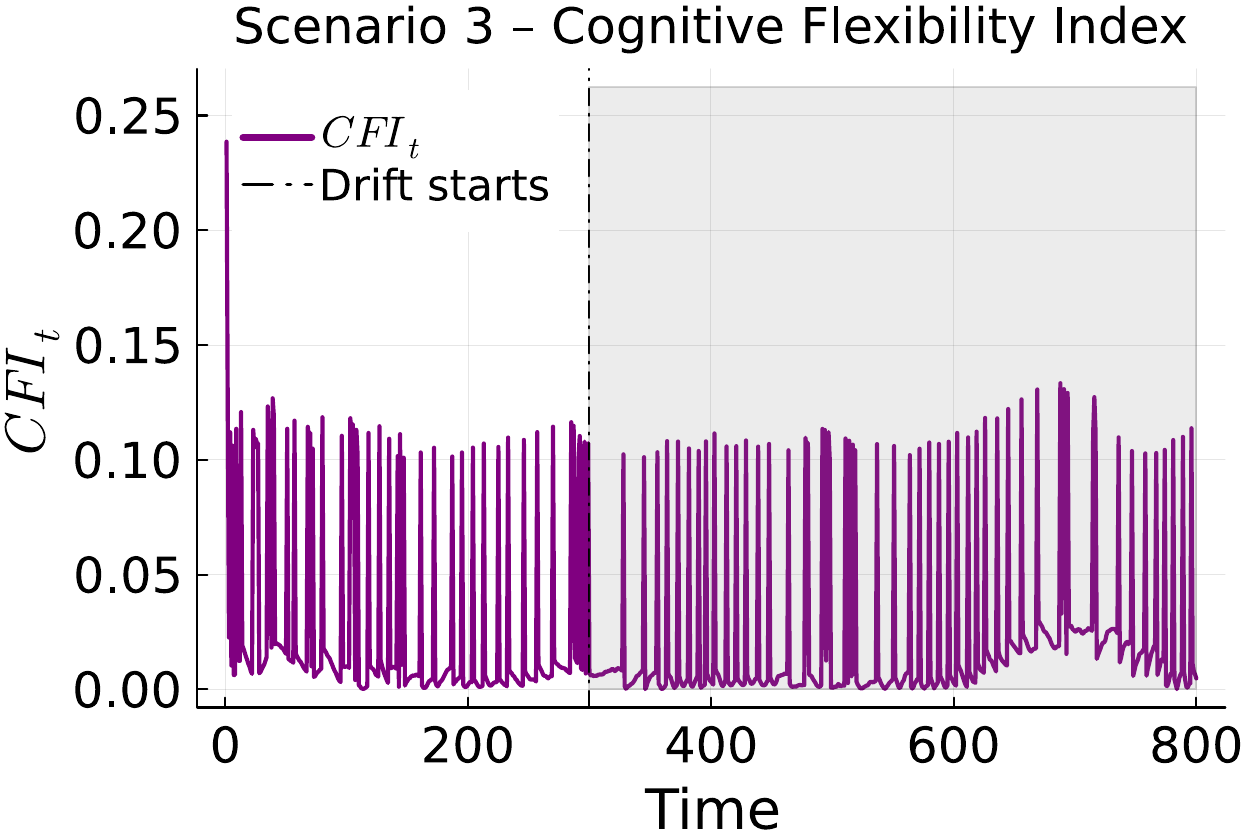}
\caption{CFI.}
\label{fig:scenario_vc_cfi}
\end{subfigure}

\caption{Scenario~\ref{sec:scenario3}. 
Gradual dynamics drift.} 
\label{fig:scenario_vc_overview}
\end{figure}


\section{Conclusion}
\label{sec:conclusion}

This letter presented \emph{CF--DeepSSSM}, a framework for safety-critical
control under partial observability and distributional shift. The approach
integrates uncertainty-aware latent dynamics learning, surprise-regulated model
adaptation, and BMPC with probabilistic safety constraints in a unified
closed loop. The key idea is \emph{regulated latent reorganization}: internal
representations adapt to predictive mismatch while their evolution is
explicitly bounded to preserve stability and safety. We established guarantees
on bounded posterior drift, recursive feasibility, and ISS of the closed-loop
belief dynamics, and validated them in simulation under abrupt dynamics and
observation shifts as well as gradual dynamics drift. 

\bibliographystyle{IEEEtran}
\bibliography{refs}

@article{Derler2012CPS,
  author={Derler, P. and Lee, E. A. and Sangiovanni-Vincentelli, A.},
  title={Modeling cyber--physical systems},
  journal={Proceedings of the IEEE},
  volume={100},
  number={1},
  pages={13--28},
  year={2012},
  month={Jan}
}

@article{Hewing2020LearningMPC,
  author={Hewing, L. and Wabersich, K. P. and Menner, M. and Zeilinger, M. N.},
  title={Learning-Based Model Predictive Control: Toward Safe Learning in Control},
  journal={Annual Review of Control, Robotics, and Autonomous Systems},
  volume={3},
  number={1},
  pages={269--296},
  year={2020}
}

@article{Becker2021,
  author={Wabersich, K. P. and Hewing, L. and Carron, A. and Zeilinger, M. N.},
  title={Probabilistic Model Predictive Safety Certification for Learning-Based Control},
  journal={IEEE Transactions on Automatic Control},
  volume={67},
  number={1},
  pages={176--188},
  year={2022},
  month={Jan}
}

@article{Gedon2021DeepSSSM,
  author={Gedon, D. and Wahlström, N. and Schön, T. B. and Ljung, L.},
  title={Deep state-space models for nonlinear system identification},
  journal={IFAC-PapersOnLine},
  volume={54},
  number={7},
  pages={481--486},
  year={2021}
}

@inproceedings{Fraccaro2017,
  author={Fraccaro, M. and Sønderby, S. K. and Paquet, U. and Winther, O.},
  title={Sequential Neural Models with Stochastic Layers},
  booktitle={Advances in Neural Information Processing Systems (NeurIPS)},
  year={2017}
}

@inproceedings{Karl2017,
  author={Karl, M. and Soelch, M. S. and Bayer, J. and van der Smagt, P.},
  title={Deep Variational Bayes Filters: Unsupervised Learning of State Space Models from Raw Data},
  booktitle={Proc. International Conference on Learning Representations (ICLR)},
  year={2017}
}

@inproceedings{Hafner2019,
  author={Hafner, D. and Lillicrap, T. and Fischer, I. and Villegas, R. and Ha, D. and Lee, H. and Davidson, J.},
  title={Learning Latent Dynamics for Planning from Pixels},
  booktitle={Proc. International Conference on Machine Learning (ICML)},
  year={2019}
}

@book{Slotine1991,
  author={Slotine, J.-J. E. and Li, W.},
  title={Applied Nonlinear Control},
  publisher={Prentice Hall},
  year={1991}
}

@book{Ioannou1996,
  author={Ioannou, P. A. and Sun, J.},
  title={Robust Adaptive Control},
  publisher={Prentice Hall},
  year={1996}
}

@article{Aswani2013,
  author={Aswani, A. and Gonzalez, H. and Sastry, S. S. and Tomlin, C.},
  title={Provably Safe and Robust Learning-Based Model Predictive Control},
  journal={Automatica},
  volume={49},
  number={5},
  pages={1216--1226},
  year={2013}
}

@article{Wabersich2021PSF,
  author={Wabersich, K. P. and Zeilinger, M. N.},
  title={Predictive Control Barrier Functions: Enhanced Safety Mechanisms for Learning-Based Control},
  journal={IEEE Transactions on Automatic Control},
  volume={68},
  number={5},
  pages={2638--2651},
  year={2023},
  month={May}
}

@article{Soloperto2023BayesianActuation,
  author={Soloperto, R. and M{\"u}ller, M. A. and Trimpe, S. and Allg{\"o}wer, F.},
  title={Learning-Based Robust Model Predictive Control with State-Dependent Uncertainty},
  journal={IFAC-PapersOnLine},
  volume={51},
  number={20},
  pages={442--447},
  year={2018}
}

@inproceedings{Achiam2017,
  author={Achiam, J. and Held, D. and Tamar, A. and Abbeel, P.},
  title={Constrained Policy Optimization},
  booktitle={Proc. International Conference on Machine Learning (ICML)},
  year={2017}
}

@article{Chow2018,
  author={Chow, Y. and Ghavamzadeh, M. and Janson, L. and Pavone, M.},
  title={Risk-Constrained Reinforcement Learning with Percentile Risk Criteria},
  journal={Journal of Machine Learning Research},
  volume={18},
  number={167},
  pages={1--51},
  year={2018}
}

@article{Berkenkamp2021SafeRL,
  author={Brunke, L. and Greeff, M. and Hall, A. W. and Yuan, Z. and Zhou, S. and Panerati, J. and Schoellig, A. P.},
  title={Safe Learning in Robotics: From Learning-Based Control to Safe Reinforcement Learning},
  journal={Annual Review of Control, Robotics, and Autonomous Systems},
  volume={5},
  pages={411--444},
  year={2022}
}

@article{Thananjeyan2021Safety,
  author={Thananjeyan, B. and Balakrishna, A. and Rosolia, U. and Li, F. and McAllister, R. and Gonzalez, J. E. and Levine, S. and Borrelli, F. and Goldberg, K.},
  title={Safety Augmented Value Estimation From Demonstrations {(SAVED)}: Safe Deep Model-Based RL for Sparse Cost Robotic Tasks},
  journal={IEEE Robotics and Automation Letters},
  volume={5},
  number={2},
  pages={3612--3619},
  year={2020}
}

@article{Scott1962,
  author={Scott, W. A.},
  title={Cognitive Complexity and Cognitive Flexibility},
  journal={Sociometry},
  volume={25},
  number={4},
  pages={405--414},
  year={1962}
}

@article{BelmonteBaeza2022MetaRL,
  author={Belmonte-Baeza, A. and Lee, J. and Valsecchi, G. and Hutter, M.},
  title={Meta Reinforcement Learning for Optimal Design of Legged Robots},
  journal={IEEE Robotics and Automation Letters},
  volume={7},
  number={4},
  pages={12134--12141},
  year={2022}
}

@article{McClement2022MetaRL,
  author={McClement, D. G. and Lawrence, N. P. and Forbes, M. G. and Loewen, P. D. and Backström, J. U. and Gopaluni, R. B.},
  title={Meta-Reinforcement Learning for Adaptive Control of Second Order Systems},
  journal={arXiv preprint arXiv:2209.09301},
  year={2022}
}

@inproceedings{Goldie2024LearnedOptRL,
  author={Goldie, A. D. and Lu, C. and Jackson, M. T. and Whiteson, S. and Foerster, J. N.},
  title={Can Learned Optimization Make Reinforcement Learning Less Difficult?},
  booktitle={Advances in Neural Information Processing Systems (NeurIPS)},
  year={2024}
}

@inproceedings{Goldie2025MetaRL,
  author={Goldie, A. D. and Wang, Z. and Cohen, J. and Foerster, J. N. and Whiteson, S.},
  title={How Should We Meta-Learn Reinforcement Learning Algorithms?},
  booktitle={Reinforcement Learning Conference (RLC)},
  note={Also available as arXiv:2507.17668},
  year={2025}
}

\end{document}